\documentclass{article}
\usepackage[utf8]{inputenc}

\usepackage{geometry}
\geometry{a4paper, margin=3cm}

\usepackage{amsthm}
\usepackage{amsmath}
\usepackage{amssymb}
\usepackage{amsfonts}
\usepackage{mathtools}
\usepackage{cancel}
\usepackage{optidef}
\usepackage{xcolor}
\usepackage{bm}
\usepackage{booktabs}
\usepackage{longtable}
\usepackage{cancel}
\usepackage{tikz}
\usepackage{pgfplots}
\usepackage{caption}
\usepackage{subcaption}
\usepackage{adjustbox}
\usepackage{wrapfig}
\usepackage[normalem]{ulem}
\usepackage{multicol}
\usepackage{breqn}
\usepackage{supertabular}
\usepackage{resizegather}


\newcommand{\lpname}[1]{\text{\textup{\textsf{#1}}}}
\newcommand{\1}{\mathbf{1}}
\newcommand{\0}{\mathbf{0}}
\newcommand{\R}{\mathbb{R}}
\newcommand{\N}{\mathbb{N}}
\newcommand{\C}{\mathcal{C}}

\newcommand{\x}{X}
\newcommand{\y}{Y} 
\newcommand{\ve}{\varepsilon}
\newcommand{\Lin}{\textup{Lin}}

\newcommand{\cube}[1]{\{0,1\}^{#1}}
\newcommand{\DelsarteCube}[2]{\lpname{Delsarte}_{cube}(#1,#2)}
\newcommand{\Delsarte}[2]{\lpname{Delsarte}(#1,#2)}
\newcommand{\DelsarteLinCube}[3]{ \lpname{DelsarteLin}(#1,#2,#3)}
\newcommand{\DelsarteLinCubeNoParams}{ \lpname{DelsarteLin}}

\newcommand{\DelsarteLinCubeMod}[4]{ \lpname{DL}_{#1}(#2,#3,#4)}
\newcommand{\DelsarteLin}[3]{\lpname{DelsarteLin}_{/\mathfrak{S}_{#2}}(#1,#2,#3)}
\newcommand{\GL}[2]{\text{\textup{GL}}(#1,#2)}

\newtheorem{theorem}{Theorem}
\newtheorem{cnj}{Conjecture}

\newtheorem{definition}{Definition}
\newtheorem{lemma}{Lemma}
\newtheorem{proposition}{Proposition}

\DeclareMathOperator{\cf}{\Gamma}

\DeclareMathOperator{\maximize}{\textup{maximize}}
\DeclareMathOperator{\st}{\textup{subject~to:}}
\DeclareMathOperator{\val}{val}

\definecolor{color1}{rgb}{0.9725490196078431, 0.4627450980392157, 0.42745098039215684}
\definecolor{color2}{rgb}{0.0, 0.7294117647058823, 0.2196078431372549}
\definecolor{color3}{rgb}{0.3803921568627451, 0.611764705882353, 1.0}
\definecolor{color4}{rgb}{0.9019607843137255, 0.6235294117647059, 0.0}
\definecolor{color5}{rgb}{0.33725490196078434, 0.7058823529411765, 0.9137254901960784}

\title{New LP-based Upper Bounds in the Rate-vs.-Distance Problem for Binary Linear Codes}
\author{
    Elyassaf Loyfer\thanks{School of Computer Science and Engineering, Hebrew University, 91904 Jerusalem, Israel. Supported in part by grant 659/18 “High-dimensional combinatorics” of the Israel Science Foundation.}
    \and
    Nati Linial\footnotemark[1]
}

\begin{document}
	
\maketitle

\begin{abstract}
We develop a new family of linear programs, that yield upper bounds
on the rate of binary linear codes of given distance.
Our bounds apply {\em only to
linear codes.} Delsarte's LP is the weakest member of this family
and our LP yields increasingly tighter upper bounds on the rate as its control parameter
increases. Numerical experiments show significant improvement
compared to Delsarte. These convincing numerical results, and the
large variety of tools available for asymptotic analysis,
give us hope that
our work will lead to new improved asymptotic upper bounds
on the possible rate of linear codes.

A concurrent work by Coregliano, Jeronimo and Jones
offers a closely related family of linear programs which converges 
to the true bound. Here we provide a new proof of convergence for the same LPs.
\end{abstract}

\section{Introduction}\label{section:introduction}

The {\em rate vs.\ distance} problem is a fundamental question in coding
theory. Namely, we seek the largest possible rate of a code of given minimal
distance. The best lower bound that we have is due to Gilbert
\cite{gilbert1952comparison} (for general codes) and Varshamov
\cite{varshamov1957estimate} (in the linear case) and is attained by random
codes. The best upper bounds that we have are due to
McEliece, Rodemich, Ramsey and Welch
(MRRW) \cite{mceliece1977new}.
Based on Delsarte's
linear program \cite{delsarte1973algebraic},
these bounds are 
often called the first and second linear programming bounds.
There is substantial empirical evidence \cite{barg1999numerical} indicating that
the MRRW bounds may be asymptotically all that Delsarte's LP yields.

We propose a new family of linear programs, which greatly strengthen Delsarte's
LP. We stress that these new LPs apply {\em only} to linear codes. They come
with a control parameter, an integer $r$. For $r=1$ our LP coincides with
Delsarte's, and as $r$ increases the LP yields tighter upper
bounds on the code's rate, at the cost of higher complexity. Numerical
experiments (Figure \ref{fig:results_diff_plot}) show that even
with $r=2$ our LP is far stronger than Delsarte's,
surpassing it in almost all instances. The improved upper bound on the code's
size is up to $2.5$ times smaller than what Delsarte gives. Moreover, in all
instances where Delsarte's upper bound is known not to be tight, we improve it.
Nevertheless, our results do not improve the best known upper bounds.

\input{results_diff_plot.tex}

Our construction is based on the elementary fact
that a linear code is closed under addition.
Combined with Delsarte's LP this simple fact has considerable consequences. To
actually derive them we use (i) The language of Boolean Fourier analysis
(ii) Symmetry that is inherent in the problem. In analyzing Delsarte's LP,
symmetrization reduces the problem size from exponential to polynomial in $n$,
and brings Krawtchouk polynomials to the fore. Also here does symmetrization
yield a dramatic reduction in size and
reveals the role of multivariate Krawtchouk polynomials.
There is a large body of work on these
high-dimensional counterparts of univariate Krawtchouks,
e.g., \cite{diaconis2014introduction,grunbaum2011system}.

Although we are still unable to reach our main goal and derive better {\em
asymptotic} bounds, there is good reason for hope. Over forty years since it was proved,
the first MRRW bound is still the best upper bound that we have for a large
range of parameters. Over the years this bound has
been reproved using various tools and techniques.
These include, properties of Krawtchouk
polynomials \cite{mceliece1977new}, analysis of Boolean functions
(e.g., \cite{friedman2005generalized,navon2005delsarte,navon2009linear,
samorodnitsky2021one}), and
spectral \cite{barg2006spectral}
as well as functional \cite{barg2008functional} analysis.
We believe that
it is a viable and promising direction to
extend proofs of the first MRRW bound to our multivariate
LP family. We are hopeful that this will lead to stronger bounds
on the rate of linear codes. We focus here on binary codes, however 
our methods can be extended to $q$-ary codes as well.

\subsection{Related Work}

A concurrent work by Coregliano, Jeronimo and Jones
\cite{coregliano2021complete} 
employs closely related ideas
to produce a family of linear programs, which
upper bound the size of linear codes.
In comparison, our LP is stricter due to several conceptual new ideas
that we introduce here. 
Numerical comparisons between our LP and that of \cite{coregliano2021complete}
appear in Figure \ref{fig:results_diff_plot} and Appendix \ref{section:numerical_results}.
We indicate the
differences throughout the text where appropriate, in particular in Sections
\ref{subsec:C2} and \ref{subsec:objective_function}.

In \cite{coregliano2021complete}, they suggest two semi-definite
programs (SDPs) which are equivalent to the LP family. 
One SDP is then used to prove that their program converges
to the true bound
as the control parameter grows. Here we suggest an alternative 
proof, which bypasses the use of SDPs.

It was Schrijver \cite{schrijver2005new} who suggested
to find an SDP that strengthens Delsarte's LP.
His SDP improved the best upper bound
for general codes in several finite instances, but there
is still no known method to improve the asymptotic bounds using this SDP.
Our LP yields tighter bounds than those of \cite{schrijver2005new}, see
Appendix \ref{section:numerical_results}.

\subsection{Organization of this Paper}

The rest of the work is organized as follows. Section
\ref{section:preliminaries} provides preliminaries and notation. In Section
\ref{section:new_lps} we develop our new LP family and discuss some
of its properties. In particular, we
prove its strength, examine its components, and suggest some variations that may prove
useful in the asymptotic analysis. In Section \ref{subsec:approx_completeness} we
provide an alternative proof for the convergence theorem of 
\cite{coregliano2021complete}.
In Section \ref{section:symmetrized_lps} we derive the symmetrized LP. 

The derivation of our LPs motivates the definition of a new linear operator
which we call {\em partial Fourier transform}.
In Section \ref{section:partial_fourier_transform_symmetries} we explore some
of its characteristics which are relevant to our LP. The main result
of this section is an interesting equivalence between two properties
of the code's indicator function.

Section \ref{section:Krawtchouk_properties} connects our construction to the
literature on
multivariate Krawtchouk polynomials. These polynomials appear naturally
when we symmetrize the LP. In addition,
we develop the {\em partial} multivariate Krawtchouks, 
which are derived from the symmetrization of partial Fourier
transform.

Appendix \ref{section:numerical_results}
shows results from numerical experiments on a wide range of parameters.

\section{Notation and Preliminaries}\label{section:preliminaries}
\subsection{General}
We denote by $\N$ the set of nonnegative integers. For a positive integer $r$,
$[r]\coloneqq \{1,2,\dots,r\}$.
Vectors are distinguished from scalars by boldface letters,
e.g. $\bm{x}=(x_1,x_2,\dots,x_n)\in\R^n$.

We consider two linear programs {\em
equivalent} if their respective optimal values are equal. Likewise, relations
between LPs, e.g. "$=$","$\leq$", refer to optimal values. We denote by $\val
(\cdot)$ the optimal value of an LP.

\subsection{The Boolean Hypercube}
The $n$-dimensional Boolean hypercube, or simply \emph{the cube} is, as usual,
the linear space $\mathbb{F}_2^{n}$ or the set $\{0,1\}^{n}$.
An element, or a vector, in the cube is denoted in bold, e.g.
$\bm{x}=(x_1,\dots,x_n)\in\cube{n}$.
Addition $\bm{x}+\bm{y}\in\cube{n}$ is bitwise "xor", or
element-wise sum modulo $2$.
Inner product between vectors in the cube is done over $\mathbb{F}_2$:
$\langle \bm{x},\bm{y}\rangle = \sum_{i=1}^{n}x_iy_i \bmod 2$.

Let $f,g: \cube{n}\to\R$ be two real functions on the cube. Their inner product is defined as
\[
    \langle f,g\rangle = 2^{-n}\sum_{\bm{x}\in\cube{n}}f(\bm{x})g(\bm{x})
\]
and their convolution 
\[
    (f*g)(\bm{x}) = 2^{-n}\sum_{\bm{y}\in\cube{n}}f(\bm{y})g(\bm{x}+\bm{y})
\]
The tensor product of $f$ and $g$ is a
function on $\cube{2n}$: \[(f\otimes g)(\bm{x},\bm{y}) = f(\bm{x})g(\bm{y}).\]


The Hamming weight of $\bm{x}$, denoted
$|\bm{x}|$, is the number of non-zero bits, $|\bm{x}| = |\{1\leq i\leq n :
x_i\neq 0\}|$. 
For $i=0,\dots,n$, the $i$-th \emph{level-set} is the set of all Boolean
vectors of weight $i$. The indicator of the $i$-th level-set is called $L_i$:
\[
    L_i(\bm{x}) =
    \begin{cases}
        1 & |\bm{x}| = i \\
        0 & \text{o/w}
    \end{cases},\quad
    \bm{x}\in\cube{n}
\]

We denote Kronecker's delta function by $\delta_{\bm{x}}(\bm{y})$.

\noindent
The \emph{Fourier character}
corresponding to $\bm{x}\in\cube{n}$, denoted
$\chi_{\bm{x}}$, is defined by
\[
\chi_{\bm{x}}({\bm{y}}) = (-1)^{\langle {\bm{x}},{\bm{y}}\rangle},\quad
{\bm{y}}\in\cube{n}.
\]
The set of characters $\{\chi_{\bm{x}}\}_{{\bm{x}}\in\cube{n}}$ is an orthonormal basis for
the space of real functions on the cube. The Fourier transform of a function
$f:\cube{n}\to\R$ is its projection over the characters, $\hat{f}({\bm{x}}) =
\langle f,\chi_{\bm{x}}\rangle =
2^{-n}\sum_{\bm{y}}\chi_{\bm{x}}({\bm{y}})f({\bm{y}})$. In Fourier space, the
inner product is not normalized: $\langle \hat{f},\hat{g}\rangle_{\mathcal{F}}
= \sum_{\bm{x}}\hat{f}({\bm{x}})\hat{g}({\bm{x}})$.

We recall Parseval's identity: $\langle f,g\rangle = \langle
\hat{f},\hat{g}\rangle$; and the convolution theorem: $(\widehat{f*g})({\bm{x}}) =
\hat{f}({\bm{x}})\hat{g}({\bm{x}})$.

The partial Fourier transform, denoted $\mathcal{F}_S$
that we introduce here plays an important role in our work, see Section \ref{section:new_lps} 
for details.

A comprehensive survey of harmonic analysis of Boolean functions can be found in
\cite{o2014analysis}.

\subsection{Codes}
A \emph{binary code} of length $n$ is a subset $\C \subset \{0,1\}^{n}$. 
Its \emph{distance} is the smallest
Hamming distance between pairs of words, $dist(\C) =
\min_{{\bm{x}},{\bm{y}}\in\C} |{\bm{x}}+{\bm{y}}|$. The largest cardinality of a code
of length $n$ and distance $d$ is denoted by $A(n,d)$. The \emph{rate} of $\C$
is defined as
\[
    R(\C) = n^{-1}\log_2(|\C|)
\]
The asymptotic rate-vs.-distance problem is to find, for every
$\delta\in(0,1/2)$,
\[
    \mathcal{R}(\delta)
    = \limsup_{n\to\infty}
    \{ R(\C) : \C\subset\{0,1\}^{n},~ dist(\C) \ge \delta n\}
\]
A linear code is a linear subspace. In the binary case, $\C\subset\{0,1\}^n$ is
linear if and only if ${\bm{x}},{\bm{y}}\in\C \Rightarrow
{\bm{x}}+{\bm{y}}\in\C$, for every ${\bm{x}},{\bm{y}}\in\{0,1\}^n$.
Consequently, in a linear code $dist(\C)=\min_{{\bm{0}\neq\bm{x}}\in\C}|{\bm{x}}|$.
We denote by $A^{\Lin}(n,d)$ the maximal size of a binary linear code
of length $n$ and minimal distance $d$.

\section{New Linear Programs}\label{section:new_lps} 

In this section we present a new family of linear programs,
starting from Delsarte's LP.
Later in this Section we discuss possible modifications
to the LPs.

Let $\C\subset\cube{n}$ be a code,
not necessarily linear, with minimal distance $d$. Let
$\1_\C$ be its indicator function, namely 
$\1_\C(\bm{x})=1$ if $\bm{x}\in\C$, and $0$ otherwise.
Define the function
\[
    f_\C = \frac{2^n}{|\C|} \1_\C * \1_\C
\]
As we explain shortly, Fourier analysis of $f_\C$ yields 
Delsarte's LP for binary codes.
Our new LP family is likewise obtained by
considering the tensor product of copies of $f_\C$.

Indeed, it is easily verified that $f_\C(0) = 1$, and $f_\C(\bm{x}) = 0$ whenever $1\leq |\bm{x}|\leq d-1$.
In addition, $f_\C \geq 0$
as a sum of indicator functions.
Also, $\hat{f}_\C \geq 0$
because, by the convolution theorem, it is a squared function: 
$\hat{f}_\C = \frac{2^n}{|C|}\hat{\1}_\C^2$.
Lastly, summing $f_\C$ over the entire cube yields the cardinality of $\C$.
This yields the following LP, whose optimal value is an upper bound on $A(n,d)$.
\begin{definition}\label{def:delsarte_cube}
    $\DelsarteCube{n}{d}$ is the following linear program:
    \begin{alignat*}{2}
        & \mathrlap{ \underset{{f:\cube{n}\to\R}}{\maximize}
            \quad \sum_{\bm{x}\in\cube{n}} f(\bm{x})
            \tag{$obj$}\label{eq:delsarte_cube:objective}
            }
            \\
        &\st \\
        &\quad f(0) = 1, \tag{$d1$}\label{eq:delsarte_cube:c1} \\
        &\quad f \geq 0,~ \hat{f} \geq 0, \quad \tag{$d2$}\label{eq:delsarte_cube:c2} \\
        &\quad f(\bm{x}) = 0 \quad & \text{\textup{if} } 1\leq |\bm{x}|\leq d-1
            \tag{$d3$}\label{eq:delsarte_cube:c3}
    \end{alignat*}
\end{definition}

Now let us assume further that $\C$ is linear. In this case,
$
    \1_\C(\bm{x})\1_\C(\bm{y}) = \1_\C(\bm{x})\1_\C(\bm{x}+\bm{y}),
$
and consequently,
\[
    f_\C = \frac{1}{|\C|}\1_\C*\1_\C = \1_\C.
\]
This implies a new set of constraints that hold for linear codes and can be added to the above LP:
\[
    f(\bm{x})f(\bm{y}) = f(\bm{x})f(\bm{x}+\bm{y})
\]
However, these constraints are not linear in $f$, nor even convex.

Therefore, we consider instead tensor products
of $f_\C$. Let $r\geq 1$ be an integer and define 
\[
    f_{\C^{r}} = f_\C\otimes \dots \otimes f_\C :\cube{rn}\to\R.
\]
The function $f_{\C^{r}}$ is defined on the $rn$-dimensional cube. 
We will view its argument as either a concatenation of $r$ vectors in $\cube{n}$,
or an $r\times n$ matrix obtained by stacking the $r$ vectors.
For example,
we write
\[
    f_{\C^{r}}(X)
    = f_{\C^{r}}(\bm{x}_1,\dots,\bm{x}_r)
\]
where $\bm{x}_1,\dots,\bm{x}_r\in\cube{n}$ are the rows of the matrix $X\in\cube{r\times n}$.

As suggested above, our LP family is derived from the linear
properties of 
$f_{\C^r}$. Some of these properties apply even for non-linear $\C$
and are inherited from the properties of the original $f_{\C}$.
The other type is properties that depend on the linearity of $\C$.
We turn to describe both types.

We begin with the first type. It is clear that
$f_{\C^r}(\0)=1$, and $f_{\C^r}(\bm{x}_1,\dots,\bm{x}_r)=0$ if
any of the vectors $\bm{x}_1,\dots,\bm{x}_r$ has weight between
$1$ and $d-1$.
The non-negativity of $f_\C$ and $\hat{f}_\C$ imply the same for
$f_{\C^r}$. But there is more: products of $f_\C$ and $\hat{f}_\C$
are also non-negative,
e.g.\ $f_{\C}(\bm{x}_1) \hat{f}_{\C}(\bm{x}_2)\geq 0$ for every
$\bm{x}_1,\bm{x}_2\in\cube{n}$.

This motivates the definition of a new linear operator, which we
name {\em partial Fourier transform}.
\begin{definition}\label{def:partial_fourier_transform}
    Let $S\subset [r]$ and $\bm{x}_1,\dots,\bm{x}_r \in\cube{n}$.
    The \textbf{partial Fourier character}
    $\Psi^{S}_{(\bm{x}_1,\dots,\bm{x}_r)}$ is defined by
    \[
        \Psi^{S}_{(\bm{x}_1,\dots,\bm{x}_r)}
        \coloneqq \psi^{(1)}_{\bm{x}_1} \otimes \psi^{(2)}_{\bm{x}_2} \otimes \dots \otimes \psi^{(r)}_{\bm{x}_r}
    \]
    where, given $\bm{x}\in\cube{n}$,
    \[
        \psi^{(i)}_{\bm{x}} \coloneqq
        \begin{cases}
            \chi_{\bm{x}} & i \in S \\
            \delta_{\bm{x}} & \text{o/w}
        \end{cases}
    \]
    $\chi_{\bm{x}}$ is a Fourier character in $\cube{n}$ and $\delta_{\bm{x}}$ is Kronecker's delta.
    
    The \textbf{partial Fourier transform} is the linear projection
    of a function \mbox{$g:\cube{rn}\to\R$} on the partial characters,
    \begin{dmath*}
        \mathcal{F}_S(g)(\bm{x}_1,\dots,\bm{x}_r)
        = 2^{(r-|S|)n}\langle g, \Psi^{S}_{(\bm{x}_1,\dots,\bm{x}_r)} \rangle
        = 2^{-|S|n} \sum g(\bm{y}_1,\dots,\bm{y}_r)
            \times
            \\
            \quad
            \times
            \prod_{i\in S} \chi_{\bm{x}_i}(\bm{y}_i)
            \prod_{i\in[r]\setminus S} \delta_{\bm{x}_i}(\bm{y}_i)
    \end{dmath*}
    the sum running over all $\bm{y}_1,\dots,\bm{y}_r\in \cube{n}$.
\end{definition}
Observe that $\mathcal{F}_\emptyset(g) = g$, $\mathcal{F}_{[r]}(g) = \hat{g}$, and
$\mathcal{F}_{\{i,j\}}(g) = \mathcal{F}_{\{i\}}(\mathcal{F}_{\{j\}}(g))$, for
$1\leq i,j\leq r$, $i\neq j$.

Using the new notation, we have $\mathcal{F}_S(f_{\C^r}) \geq 0$ for every
$S\subset [r]$.

The last inherited property of $f_{\C^r}$ has to do with the cardinality of $\C$. 
Summing $f_{\C^r}$ over the entire $rn$-dimensional cube
yields $|\C|^r$. Alternatively, one can obtain the value of $|\C|$
by summing one component over $\cube{n}$, and fixing the other
components at $0$:
\[
    \sum_{\bm{x}\in\cube{n}} f_{\C^r}(\bm{x},0,\dots,0)
    = \sum_{\bm{x}\in\cube{n}} f_{\C}(\bm{x}) \left(f_{\C}(0)\right)^{r-1}
    = |C|
\]

We turn to discuss the properties which depend on 
the linearity of the code $\C$.
If $\C$ is a linear code and $\bm{x}_1,\dots,\bm{x}_r\in \C$, then $\C$
contains their linear span. Hence
\[
    f_{\C^r}(X) = \prod_{i=1}^{r} \1_{\C}(\bm{x}_i) = \prod_{\bm{x}\in rowspan(X)} \1_\C(\bm{x})
\]
which implies that $f_{\C^r}$ is invariant under the action of $\GL{r}{2}$, the general linear group
over $\mathbb{F}_2$.
\begin{equation}
    \label{eq:fCr_invariant_under_GL}
    f_{\C^r}(X) = f_{\C^r}(TX) \quad \forall T\in \GL{r}{2},~ X\in\cube{r\times n}
\end{equation}

One more interesting property involves the dual code,
\[
    \C^\perp \coloneqq \{\bm{x} \in \cube{n} : \langle \bm{x},\bm{y} \rangle_{\mathbb{F}_2}=0~ \forall \bm{y}\in\C\}.
\]
If $\C$ is linear, then the Fourier transform of its indicator
$\hat{\1}_\C$ is the indicator of the dual code,
up to normalization (see e.g., \cite{o2014analysis}, Proposition 3.11):
\[
    \hat{\1}_{\C} = \frac{1}{\left|\C^\perp\right|} \1_{\C^\perp}
\]
This fact can be utilized through the partial Fourier transform as follows.
Let $\bm{x},\bm{y}\in\cube{n}$
such that $\langle\bm{x},\bm{y}\rangle_{\mathbb{F}_2} \neq 0$,
then either $\bm{x}\notin\C^\perp$ or $\bm{y}\notin\C$. Consequently,
if $i\in S$ and $j\notin S$ for some $S\subseteq [r]$, and 
$\langle\bm{x}_i,\bm{x}_j\rangle_{\mathbb{F}_2} \neq 0$, then
\begin{equation}
    \label{eq:fCr_dual_code}
    \mathcal{F}_{S}(f_{\C^r})(\bm{x}_1,\dots,\bm{x}_r)
    = \left(\prod_{k\in S} \frac{1}{|\C^\perp|} \1_{\C^\perp} (\bm{x}_k)\right)
    \left(\prod_{k\in [r]\setminus S} \1_{\C}(\bm{x}_k)\right)
    =0
\end{equation}
Surprisingly perhaps, this adds no new information:
properties \eqref{eq:fCr_invariant_under_GL} and \eqref{eq:fCr_dual_code}
are equivalent, as we show
in Section
\ref{section:partial_fourier_transform_symmetries}.

This concludes our discussion on the linear properties
of the tensor product $f_{\C^r}$. We are now ready to
define the new LP family.

\begin{definition}\label{def:delsarte_lin_cube}
    $\DelsarteLinCube{r}{n}{d}$:
    \begin{alignat*}{3}
        & \mathrlap{\underset{f:\cube{rn}\to\R}{\maximize}\quad
            \sum_{\bm{x}\in\cube{n}} f(\bm{x},0,\dots,0)
            }
            \tag{$Obj$}\label{eq:delsarte_lin:objective}
        \\
        &\st \\
        &\quad f(\mathbf{0}) = 1 \tag{$C1$}\label{eq:delsarte_lin:C1}
        \\
        &\quad \mathcal{F}_S(f) \geq 0 
            \quad
            && \forall S\subset [r]
            \tag{$C2$}\label{eq:delsarte_lin:C2}
        \\
        &\quad f(\bm{x}_1,\dots,\bm{x}_r) = 0 
            \quad
            &&
            \text{if }  1\leq |\bm{x}_1|\leq d-1 
            \tag{$C3$}\label{eq:delsarte_lin:C3}
        \\
        &\quad f(\x) = f(T\x)
            \quad 
            &&
            \forall T\in \GL{r}{2},~X\in\cube{r\times n}
            \tag{$C4$}\label{eq:delsarte_lin:C4}
    \end{alignat*}
\end{definition}
Here, $\mathcal{F}_S(f)$ is the partial Fourier transform defined above.
Also,
$\GL{r}{2}$ is the general linear group over $\mathbb{F}_2$. Note also the parallels between conditions 
\eqref{eq:delsarte_cube:c1}, \eqref{eq:delsarte_cube:c2}, \eqref{eq:delsarte_cube:c3}
resp.\
\eqref{eq:delsarte_lin:C1}, \eqref{eq:delsarte_lin:C2}, \eqref{eq:delsarte_lin:C3}

\vspace{5mm}

\begin{theorem}\label{thm:lp_strength}
    Let $r,n,d$ be positive integers such that $d\leq n/2$.
    \begin{enumerate}
        \item
        $
            A^{\Lin}(n,d) \leq \val\DelsarteLinCube{r}{n}{d} 
        $
        \item
        $
            \val\DelsarteLinCube{r+1}{n}{d}
            \leq \val\DelsarteLinCube{r}{n}{d} 
        $
        \item
        $
           \val\DelsarteLinCube{1}{n}{d} 
            = \val\DelsarteCube{n}{d} 
        $
    \end{enumerate}
\end{theorem}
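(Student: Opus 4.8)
The plan is to prove the three claims in the order \emph{3, 1, 2}: part 3 fixes the base case by unwinding definitions, part 1 produces an explicit feasible point, and part 2 is a restriction argument that reuses the structure of part 1. In all three, I expect the partial Fourier constraints \eqref{eq:delsarte_lin:C2} to be the only nontrivial point.

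For part 3, set $r=1$, so $f:\cube{n}\to\R$. The group $\GL{1}{2}=\{1\}$ is trivial, hence \eqref{eq:delsarte_lin:C4} is vacuous. The only subsets of $[1]$ are $\emptyset$ and $\{1\}$, and since $\mathcal{F}_\emptyset(f)=f$ and $\mathcal{F}_{\{1\}}(f)=\hat f$, constraint \eqref{eq:delsarte_lin:C2} reads exactly $f\ge 0,\ \hat f\ge0$, i.e.\ \eqref{eq:delsarte_cube:c2}. The objective \eqref{eq:delsarte_lin:objective} and the constraints \eqref{eq:delsarte_lin:C1}, \eqref{eq:delsarte_lin:C3} then coincide verbatim with \eqref{eq:delsarte_cube:objective}, \eqref{eq:delsarte_cube:c1}, \eqref{eq:delsarte_cube:c3}. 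So the two programs are literally the same, and their values agree.

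For part 1, I would exhibit $f_{\C^r}=f_\C\otimes\cdots\otimes f_\C$ as a feasible solution of value $|\C|$. Writing $X\in\cube{r\times n}$ as a stack of rows $\bm{x}_1,\dots,\bm{x}_r$ and using $f_\C=\1_\C$, we have $f_{\C^r}(X)=\prod_{i=1}^r\1_\C(\bm{x}_i)$. Since $\mathbf{0}\in\C$, the objective collapses to $\sum_{\bm{x}}\1_\C(\bm{x})=|\C|$, and \eqref{eq:delsarte_lin:C1}, \eqref{eq:delsarte_lin:C3} are immediate from $\mathbf{0}\in\C$ and $\mathrm{dist}(\C)=d$. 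For \eqref{eq:delsarte_lin:C4}, note that $f_{\C^r}(X)=1$ exactly when every row of $X$ lies in $\C$; as $\C$ is closed under $\mathbb{F}_2$-combinations and $T$ is invertible, the rows of $X$ lie in $\C$ iff those of $TX$ do, giving $f_{\C^r}(X)=f_{\C^r}(TX)$. The crux is \eqref{eq:delsarte_lin:C2}: because $f_{\C^r}$ is a tensor product and $\mathcal{F}_S$ acts coordinatewise (a character in each $i\in S$, a delta otherwise), it factorizes as $\mathcal{F}_S(f_{\C^r})=\prod_{i\in S}\widehat{\1_\C}(\bm{x}_i)\prod_{i\notin S}\1_\C(\bm{x}_i)$. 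Each factor is nonnegative: $\1_\C\ge0$, and applying the convolution theorem to $f_\C=\tfrac{1}{|\C|}\1_\C*\1_\C=\1_\C$ gives $\widehat{\1_\C}=\tfrac{1}{|\C|}(\widehat{\1_\C})^2\ge0$.

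For part 2, I would take any feasible $f$ for $\DelsarteLinCube{r+1}{n}{d}$ and freeze its last block to $\mathbf{0}$, defining $g(\bm{x}_1,\dots,\bm{x}_r)=f(\bm{x}_1,\dots,\bm{x}_r,\mathbf{0})$ on $\cube{rn}$. The value is preserved, since the objective of $g$ and that of $f$ both equal $\sum_{\bm{x}}f(\bm{x},\mathbf{0},\dots,\mathbf{0})$; constraints \eqref{eq:delsarte_lin:C1} and \eqref{eq:delsarte_lin:C3} transfer directly, and \eqref{eq:delsarte_lin:C4} follows by lifting $T\in\GL{r}{2}$ to the block matrix $\operatorname{diag}(T,1)\in\GL{r+1}{2}$. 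The main obstacle is once more \eqref{eq:delsarte_lin:C2}. Here I would regard $S\subset[r]$ as a subset of $[r+1]$: in the defining sum for $\mathcal{F}_S(f)$ the factor $\delta_{\mathbf{0}}(\bm{y}_{r+1})$ collapses the last summation variable to $\mathbf{0}$, yielding the identity $\mathcal{F}_S(g)(\bm{x}_1,\dots,\bm{x}_r)=\mathcal{F}_S(f)(\bm{x}_1,\dots,\bm{x}_r,\mathbf{0})$, which is nonnegative by feasibility of $f$. Hence $g$ is feasible with the same value, proving $\val\DelsarteLinCube{r+1}{n}{d}\le\val\DelsarteLinCube{r}{n}{d}$.
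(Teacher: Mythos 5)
Your proposal is correct and follows essentially the same route as the paper: part 1 by exhibiting the tensor power $f_{\C^r}=f_\C\otimes\cdots\otimes f_\C$ as a feasible point of value $|\C|$, part 2 by freezing the last block to $\mathbf{0}$ via $g(\bm{x}_1,\dots,\bm{x}_r)=f(\bm{x}_1,\dots,\bm{x}_r,\mathbf{0})$, and part 3 by observing the two programs coincide. Your write-up merely makes explicit the steps the paper leaves as ``easy to verify'' (the identity $\mathcal{F}_S(g)=\mathcal{F}_S(f)(\cdot,\mathbf{0})$, the lift $T\mapsto\operatorname{diag}(T,1)$, and $\widehat{\1}_\C=\tfrac{1}{|\C|}(\widehat{\1}_\C)^2\ge 0$), all of which check out.
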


We make a few comments before we turn to the proof. Already for $r=2$, and in most instances,
$\DelsarteLinCubeNoParams$ is significantly stronger than
Delsarte's. For more on this, see
Figure \ref{fig:results_diff_plot} and Section
\ref{section:numerical_results}. We also note that $\DelsarteLinCubeNoParams$
without \eqref{eq:delsarte_lin:C4} yields exactly the bounds as
Delsarte's LP.

\begin{proof}\hfill

\begin{enumerate}
    \item By the preceding discussion, for every binary linear code
    $\C$ of length $n$ and minimal distance $d$, $f_{\C^r}$ is a feasible solution with 
    value $|\C|$.
    
    \item Let $f:\cube{(r+1)n}\to\R$ be a 
    feasible solution to $\DelsarteLinCube{r+1}{n}{d}$. We construct
    a feasible solution to $\DelsarteLinCube{r}{n}{d}$ with value
    at least $\val(f)$.
    
    Let
    \[
        g:\cube{rn}\to\R,\quad g(\bm{x}_1,\dots,\bm{x}_r)
        = f(\bm{x}_1,\dots,\bm{x}_r,\bm{0})
    \]
    It is easy to verify that $g$ is feasible for 
    $\DelsarteLinCube{r}{n}{d}$, and it is clear that
    $\val(g)=\val(f)$.
    
    \item Obvious, 
    $\DelsarteLinCube{1}{n}{d}$ and 
    $\DelsarteCube{n}{d}$ are identical.
\end{enumerate}
\end{proof}

In the rest of this section, we examine the strength and consequences
of some components of $\DelsarteLinCubeNoParams$.
We also discuss two modifications that
may be helpful in the search for asymptotic results.

\subsection{On the significance of \eqref{eq:delsarte_lin:C4}}
\label{subsec:C4}

As mentioned above, \eqref{eq:delsarte_lin:C4} is equivalent to
a constraint that uses the dual code:
\begin{align*}
    \mathcal{F}_S(f)(\bm{x}_1,\dots,\bm{x}_r) = 0
    \quad &\text{ if } \langle\bm{x}_i, \bm{x}_j\rangle_{\mathbb{F}_2} = 1
    \\
    & \text{ for some } i\in S,~j\notin S
    \tag{$C5$}\label{eq:delsarte_lin:C5}
\end{align*}
We prove the equivalence
below, in Lemma \ref{lemma:C456_equivalence}.
As \eqref{eq:delsarte_lin:C4} and \eqref{eq:delsarte_lin:C5} are
the only constraints that rely on the code's linearity,
without them the
LP is equivalent to Delsarte's LP, for every $r$.

\vspace{5mm}

An obvious consequence of \eqref{eq:delsarte_lin:C4} is
that \eqref{eq:delsarte_lin:C3} is equivalent to
\[
    f(X) = 0~ \text{if } 1\leq |\bm{u}^ \intercal X | \leq d-1
    \text{ for some } \bm{u}\in\cube{r}
    \tag{$C3'$}\label{eq:delsarte_lin:C3_span}
\]
$\{\bm{u}^\intercal X\}_{\bm{u}\in\cube{r}}$ is the row span of $X$.
Similarly, \eqref{eq:delsarte_lin:C4}
renders the objective \eqref{eq:delsarte_lin:objective} equivalent to
\[
    \maximize \quad (2^r-1)^{-1}
        \sum_{0\neq \bm{u}\in\cube{r}}\sum_{\bm{x}\in\cube{n}}
            f(u_1 \cdot \bm{x},\dots,u_r\cdot \bm{x})
    \tag{$Obj''$}\label{eq:delsarte_lin:objective_span}
\]

To numerically test the significance of \eqref{eq:delsarte_lin:C4},
we removed it but
kept its immediate consequences. Namely, we
replaced
\eqref{eq:delsarte_lin:objective} and \eqref{eq:delsarte_lin:C3}
with 
\eqref{eq:delsarte_lin:objective_span}
and \eqref{eq:delsarte_lin:C3_span}.
A sample from our
numerical experiments is shown in Figure
\ref{fig:C4_significance}. It confirms that 
this change does weaken the LP, though 
not significantly. However, we only experimented with $r=2$,
and it is possible that for larger values of $r$
the difference becomes more substantial.
\begin{figure}[h]
\centering
\begin{tabular}{cc|rrr}
      & r & 1 & 2 & 2 \\  
    n & d & Delsarte & $\DelsarteLinCubeNoParams$ & 
        ${
            \eqref{eq:delsarte_lin:objective_span},
            \eqref{eq:delsarte_lin:C3_span},
            \cancel{\eqref{eq:delsarte_lin:C4}}}$ \\
    \toprule
    16 & 4 & 2048 & 2048 & 2048 \\
    & 6 & 256 & 131.72 & 156.44 \\
    & 8 & 32  & 32      & 32 \\
    \midrule
    17 & 4 & 3640.89 & 3072.96 & 3075 \\
    & 6 & 425.56 & 256 & 264.88 \\
    & 8 & 50.72 & 32 & 32.31
\end{tabular}
\caption{
Numerical experiments on the significance of
\eqref{eq:delsarte_lin:C4}.
The first column is Delsarte's LP.
The second column is $\DelsarteLinCube{2}{n}{d}$.
The third column is a modification of $\DelsarteLinCube{2}{n}{d}$,
where
\eqref{eq:delsarte_lin:C3} is replaced by
\eqref{eq:delsarte_lin:C3_span};
the objective function is replaced by
\eqref{eq:delsarte_lin:objective_span};
and \eqref{eq:delsarte_lin:C4} is removed.}
\label{fig:C4_significance}
\end{figure}

Lastly, \eqref{eq:delsarte_lin:C4} implies other
symmetries for $\mathcal{F}_{S}(f)$. While these do not strengthen
the LP, they provide an exponential in $r$ reduction in the number of constraints.
For proof, see Lemma \ref{lemma:C456_equivalence}.
    \[
        \mathcal{F}_{S}(f)(X) = \mathcal{F}_{S}(f)(T_1 T_2 X),
        \tag{$C6$}\label{eq:delsarte_lin:C6}
    \]
    for every $T_1,T_2 \in \GL{r}{2}$, such that 
    $T_1 e_i = e_i ~ \forall i\in S$ and 
    $T_2 e_i = e_i ~ \forall i\in [r]\setminus S$.
    
    \[
        \mathcal{F}_{S}(f)(\bm{x}_1,\dots,\bm{x}_r)
        = \mathcal{F}_{\pi^{-1}(S)}(f)(
            \bm{x}_{\pi(1)},\dots,\bm{x}_{\pi(r)}),
        \tag{$C7$}\label{eq:delsarte_lin:C7}
    \]
    for every $\pi\in\mathfrak{S}_r$ -- permutation on $r$
    elements.

\subsection{On the significance of \eqref{eq:delsarte_lin:C2}}
\label{subsec:C2}

A weaker, simpler LP is obtained from $\DelsarteLinCube{r}{n}{d}$
by replacing \eqref{eq:delsarte_lin:C2} with
\[
    f \geq 0,~ \hat{f} \geq 0
    \tag{$C2'$}\label{eq:delsarte_lin:C2_weak}
\]
This modification restores the feasible region of
the LP developed by \cite{coregliano2021complete}.
The modified LP is still stronger than Delsarte's LP, and it becomes stronger 
with growing $r$, as we prove in Theorem \ref{thm:weak_lp_strength}.
Its simplicity might make it more suitable for asymptotic analysis.

We observed empirically that this modification
greatly weakens the LP. A small sample is given here in Figure
\ref{fig:compare_C2_weak}, and
more can be found in Section \ref{section:numerical_results}
and in Figure \ref{fig:results_diff_plot}.
\begin{figure}[h]
\centering
\begin{tabular}{cc|rrrr}
      & r & 1 & 2 & 2 & 3 \\
    n & d & Delsarte & 
        \eqref{eq:delsarte_lin:C2} &
            \eqref{eq:delsarte_lin:C2_weak} &
                \eqref{eq:delsarte_lin:C2_weak} \\
    \toprule
    13 & 6 & 40 & 24.26 & 32 & 23.07 \\
    \midrule
    30 & 8 & 114816 & 71094.5 & 107044 & - \\
    30 & 10 & 12525.4 & 5928.52 & 11340.4 & - \\
    30 & 12 & 1131.79 & 582.09 & 1026.28 & - \\
    30 & 14 & 129.68 & 80.08 & 112 & - \\
\end{tabular}
\caption{Comparison between \eqref{eq:delsarte_lin:C2},
\eqref{eq:delsarte_lin:C2_weak} and Delsarte.
Each column shows the optimal value a of different LP.
The LPs from left to right:
Delsarte's LP;
Our LP with $r=2$;
Our modified LP with \eqref{eq:delsarte_lin:C2_weak}
instead of \eqref{eq:delsarte_lin:C2}, with $r=2$;
and again the modified LP, with $r=3$.
This exhausts the results that we have for $r=3$.}

\label{fig:compare_C2_weak}
\end{figure}

\begin{theorem} \label{thm:weak_lp_strength}
    Let $r,n,d$ be positive integers such that $d\leq n/2$.
    For every binary linear code $\C$ with length $n$ and distance $d$,
    \[
        |\C|
        \leq \val \DelsarteLinCubeMod{\eqref{eq:delsarte_lin:C2_weak}}{r+1}{n}{d}
        \leq \val \DelsarteLinCubeMod{\eqref{eq:delsarte_lin:C2_weak}}{r}{n}{d}
    \]
    where $\DelsarteLinCubeMod{\eqref{eq:delsarte_lin:C2_weak}}{r}{n}{d}$ is
    the variant of $\DelsarteLinCube{r}{n}{d}$ in which \eqref{eq:delsarte_lin:C2}
    is replaced by \eqref{eq:delsarte_lin:C2_weak}.
\end{theorem}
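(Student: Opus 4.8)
The plan is to follow the two-part structure of the proof of Theorem~\ref{thm:lp_strength}, since the only genuinely new point is checking the relaxed constraint \eqref{eq:delsarte_lin:C2_weak} under the two operations involved, namely tensoring and restriction of a coordinate block.

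For the outer bound $|\C|\leq\val\DelsarteLinCubeMod{\eqref{eq:delsarte_lin:C2_weak}}{r+1}{n}{d}$ I would first note that \eqref{eq:delsarte_lin:C2_weak} is a relaxation of \eqref{eq:delsarte_lin:C2}: choosing $S=\emptyset$ and $S=[r+1]$ in \eqref{eq:delsarte_lin:C2} recovers $f\geq0$ and $\hat f\geq0$ via $\mathcal{F}_\emptyset(f)=f$ and $\mathcal{F}_{[r+1]}(f)=\hat f$. Hence every feasible point of $\DelsarteLinCube{r+1}{n}{d}$ is feasible for the modified program with the same objective, so $\val\DelsarteLinCube{r+1}{n}{d}\leq\val\DelsarteLinCubeMod{\eqref{eq:delsarte_lin:C2_weak}}{r+1}{n}{d}$, and Theorem~\ref{thm:lp_strength}(1) finishes the job. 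Equivalently, one can exhibit $f_{\C^{r+1}}$ directly, verifying $f_{\C^{r+1}}\geq0$ and $\hat f_{\C^{r+1}}=\hat f_\C\otimes\dots\otimes\hat f_\C\geq0$ from $f_\C,\hat f_\C\geq0$, with the remaining constraints handled verbatim as in that proof.

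For the monotonicity $\val\DelsarteLinCubeMod{\eqref{eq:delsarte_lin:C2_weak}}{r+1}{n}{d}\leq\val\DelsarteLinCubeMod{\eqref{eq:delsarte_lin:C2_weak}}{r}{n}{d}$ I would take a feasible $f$ for the $(r+1)$-variant and form the restriction $g(\bm{x}_1,\dots,\bm{x}_r)=f(\bm{x}_1,\dots,\bm{x}_r,\bm{0})$, exactly as in Theorem~\ref{thm:lp_strength}(2). The objective value, \eqref{eq:delsarte_lin:C1}, \eqref{eq:delsarte_lin:C3}, and $g\geq0$ transfer immediately from $f$; \eqref{eq:delsarte_lin:C4} transfers because any $T\in\GL{r}{2}$ lifts to the block-diagonal element of $\GL{r+1}{2}$ acting as $T$ on the first $r$ rows and fixing the appended zero row. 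The single step needing computation is $\hat g\geq0$. Expanding $f$ by Fourier inversion on $\cube{(r+1)n}$ and setting $\bm{x}_{r+1}=\bm{0}$, so that $\chi_{\bm{\xi}_{r+1}}(\bm{0})=1$, yields
\[
    \hat g(\bm{\xi}_1,\dots,\bm{\xi}_r)
    = \sum_{\bm{\xi}_{r+1}\in\cube{n}}
        \hat f(\bm{\xi}_1,\dots,\bm{\xi}_r,\bm{\xi}_{r+1}),
\]
that is, zeroing the last primal block dualizes to marginalizing $\hat f$ over the last frequency block. Since $\hat f\geq0$ by \eqref{eq:delsarte_lin:C2_weak}, this sum is nonnegative, so $g$ is feasible with $\val(g)=\val(f)$.

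The step to watch is precisely this last identity. In Theorem~\ref{thm:lp_strength}(2) the full family \eqref{eq:delsarte_lin:C2} of partial-Fourier constraints is available and restricts transparently, whereas here I am only handed $f\geq0$ and $\hat f\geq0$, so $\hat g\geq0$ cannot simply be inherited and must be obtained from the marginalization identity above. Everything else is a routine transcription of the earlier argument, so I would present this computation explicitly and dispatch the remaining checks briefly.
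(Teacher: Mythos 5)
Your proposal is correct and follows essentially the same route as the paper: the first inequality via noting that \eqref{eq:delsarte_lin:C2_weak} relaxes \eqref{eq:delsarte_lin:C2} and invoking Theorem \ref{thm:lp_strength}, and the second via the restriction $g(\bm{x}_1,\dots,\bm{x}_r)=f(\bm{x}_1,\dots,\bm{x}_r,\bm{0})$, whose only nontrivial check is the marginalization identity $\hat{g}(\bm{x}_1,\dots,\bm{x}_r)=\sum_{\bm{y}}\hat{f}(\bm{x}_1,\dots,\bm{x}_r,\bm{y})\geq 0$. The paper derives that identity through partial Fourier transform manipulations ($\hat{g}=\mathcal{F}_{\{1,\dots,r\}}(f)(\cdot,\bm{0})=2^n\mathcal{F}_{\{r+1\}}(\hat{f})(\cdot,\bm{0})$) rather than your Fourier-inversion argument, but the computations are equivalent and you correctly identified it as the crux.
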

\begin{proof}
    The first inequality follows from Theorem \ref{thm:lp_strength},
    by noting that every feasible solution
    to $\DelsarteLinCube{r}{n}{d}$ is a feasible solution
    to the modified version.
    
    For the second inequality, let $f:\cube{(r+1)n}\to\R$ be a feasible
    solution to $\DelsarteLinCubeMod{\eqref{eq:delsarte_lin:C2_weak}}{r+1}{n}{d}$.
    Define
    \[
        g:\cube{rn}\to\R,\quad
        g(\bm{x}_1,\dots,\bm{x}_r) =
            f(\bm{x}_1,\dots,\bm{x}_r,\bm{0})
    \]
    It is obvious that $g(\bm{0}) = 1$; $g \geq 0$;
    $g(\bm{x}_1,\dots,\bm{x}_r)=0$ if $1\leq |\bm{x}_1|\leq d-1$; and
    that
    $g(X) = g(TX)$ for every $T\in \GL{r}{2}$.
    To prove that $g$ is feasible, it
    remains to show that
    $\hat{g}\geq 0$. Observe that 
    \begin{dmath*}
        \hat{g}(\bm{x}_1,\dots,\bm{x}_r)
        = \mathcal{F}_{\{1,\dots,r\}}(f)(\bm{x}_1,\dots,\bm{x}_r,\bm{0})
        = 2^n \mathcal{F}_{\{r+1\}}(\hat{f})(\bm{x}_1,\dots,\bm{x}_r,\bm{0})
        = \sum_{\bm{y}\in\cube{n}}
            \chi_{\bm{0}}(\bm{y})\hat{f}(\bm{x}_1,\dots,\bm{x}_r,\bm{y})
        = \sum_{\bm{y}\in\cube{n}} \hat{f}(\bm{x}_1,\dots,\bm{x}_r,\bm{y})
    \end{dmath*}
    which is non-negative since $\hat{f}\geq 0$.
    The value of $f$ equals the value of $g$, which is at most
    $\val \DelsarteLinCubeMod{\eqref{eq:delsarte_lin:C2_weak}}{r}{n}{d}$.

\end{proof}

\subsection{On the objective function}
\label{subsec:objective_function}

As discussed above,
an alternative objective function can be used, which bounds $\left(A^\Lin(n,d)\right)^r$
instead of $A^\Lin(n,d)$:
\[
    \maximize
    \sum_{\bm{x}_1,\dots,\bm{x}_r\in\cube{n}}f(\bm{x}_1,\dots,\bm{x}_r)
    \tag{$Obj'$}\label{eq:delsarte_lin:objective_r}
\]
This is the objective function used in \cite{coregliano2021complete}.

Our numerical calculations reveal rather minor differences between the two
objectives, with no consistent advantage to one over the other. See
detailed results in Section \ref{section:numerical_results}. 

We state:
\begin{cnj}
    \label{cnj:alt_obj_strength}
    Let $r,n,d$ be positive integers such that $d\leq n/2$. Then\\
    \begin{gather*}
        \left(\val \DelsarteLinCubeMod{
            \eqref{eq:delsarte_lin:objective_r}}{r+1}{n}{d}\right)^{1/(r+1)}
        \leq \left(\val \DelsarteLinCubeMod{
                \eqref{eq:delsarte_lin:objective_r}}{r}{n}{d}\right)^{1/r}
    \end{gather*}
    Here, $\DelsarteLinCubeMod{\eqref{eq:delsarte_lin:objective_r}}{r}{n}{d}$
    is obtained from $\DelsarteLinCube{r}{n}{d}$ by replacing the objective
    function with \eqref{eq:delsarte_lin:objective_r}.
\end{cnj}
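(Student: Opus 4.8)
The plan is to show that the normalized value $V_r^{1/r}$, where $V_r := \val \DelsarteLinCubeMod{\eqref{eq:delsarte_lin:objective_r}}{r}{n}{d}$, is non-increasing in $r$; this is exactly the conjectured inequality $V_{r+1}^{1/(r+1)} \le V_r^{1/r}$, equivalently $V_{r+1}^r \le V_r^{r+1}$. Two structural facts will drive everything. First, the objective \eqref{eq:delsarte_lin:objective_r} is invariant under $X \mapsto TX$ for $T \in \GL{r}{2}$ (it sums $f$ over all of $\cube{r \times n}$), so it is compatible with \eqref{eq:delsarte_lin:C4}; by that same constraint every feasible $f$ depends only on the row space of its matrix argument. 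Second, I will use the convention $V_0 = 1$ (the level-$0$ program forces the single value to be $1$).

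The engine of the argument is a marginalization step. Given $f$ feasible for level $r+1$, set $g(\bm x_1, \dots, \bm x_r) = \sum_{\bm x_{r+1} \in \cube n} f(\bm x_1, \dots, \bm x_r, \bm x_{r+1})$ and $\tilde g = g / g(\bm 0)$. I would verify that $\tilde g$ is feasible for level $r$: \eqref{eq:delsarte_lin:C1} holds after normalization (note $g(\bm 0) \ge f(\bm 0) = 1$, as $f \ge 0$ by \eqref{eq:delsarte_lin:C2} with $S = \emptyset$), while \eqref{eq:delsarte_lin:C3} and \eqref{eq:delsarte_lin:C4} are immediate, and the crucial \eqref{eq:delsarte_lin:C2} survives because $g = 2^n\,\mathcal F_{\{r+1\}}(f)(\cdot, \bm 0)$, whence for $S \subseteq [r]$ one has $\mathcal F_S(\tilde g) = g(\bm 0)^{-1} 2^n\, \mathcal F_{S \cup \{r+1\}}(f)(\cdot, \bm 0) \ge 0$. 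Since the $\eqref{eq:delsarte_lin:objective_r}$-objective of $\tilde g$ equals $V_{r+1}/g(\bm 0)$ when $f$ is optimal, and $g(\bm 0) = \sum_{\bm x} f(\bm x, \bm 0, \dots, \bm 0) =: \phi(f)$ is precisely the value of $f$ under the original objective \eqref{eq:delsarte_lin:objective} (using the permutation symmetry among coordinates that follows from \eqref{eq:delsarte_lin:C4}), this yields the key inequality $V_{r+1} \le V_r \cdot \phi(f^*)$, where $f^*$ is an optimizer at level $r+1$. Already the crude bound $\phi(f^*) \le V_1 = \val \DelsarteCube n d$ (obtained by viewing $\bm x \mapsto f^*(\bm x, \bm 0, \dots, \bm 0)$ as a Delsarte-feasible function) gives sub-multiplicativity $V_{r+1} \le V_r V_1$, hence by Fekete's lemma the convergence of $V_r^{1/r}$ --- but only to its infimum, which is weaker than monotonicity.

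To upgrade convergence to the conjectured monotonicity it suffices to sharpen the control on the normalizing constant to $\phi(f^*) \le \Phi(f^*)^{1/(r+1)} = V_{r+1}^{1/(r+1)}$, where $\Phi$ denotes the $\eqref{eq:delsarte_lin:objective_r}$-objective. Indeed, substituting into $V_{r+1} \le V_r \, \phi(f^*)$ gives $V_{r+1} \le V_r \, V_{r+1}^{1/(r+1)}$, i.e. $V_{r+1}^{r/(r+1)} \le V_r$, which is exactly $V_{r+1}^{1/(r+1)} \le V_r^{1/r}$. Thus the whole conjecture reduces to the clean per-solution estimate $\phi(f) \le \Phi(f)^{1/(r+1)}$ for (at least the optimal) feasible $f$ at level $r+1$.

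The main obstacle is precisely this per-solution inequality. It does hold on the convex hull of genuine code solutions: for $f = f_{\C^{r+1}}$ one has $\phi = |\C|$ and $\Phi = |\C|^{r+1}$, so equality holds at these points, and for a convex combination $\sum_i \lambda_i f_{\C_i^{r+1}}$ the concavity of $t \mapsto t^{1/(r+1)}$ (power-mean / Jensen) turns this vertex equality into the desired $\ge$. The difficulty is that the feasible region of the LP is a strict relaxation of this hull, and the optimizer $f^*$ maximizes $\Phi$ with no a priori control on $\phi$; bounding $\phi(f^*)$ from above by $\Phi(f^*)^{1/(r+1)}$ using only the partial-Fourier positivities \eqref{eq:delsarte_lin:C2}, together with \eqref{eq:delsarte_lin:C3} and \eqref{eq:delsarte_lin:C4}, is exactly the gap between the integer program over codes and its relaxation --- and is, I expect, the reason the statement is posed as a conjecture. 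A natural line of attack would be to express $\phi$ and $\Phi$ through the row-space function guaranteed by \eqref{eq:delsarte_lin:C4} and seek a direct convexity or spectral argument; alternatively one could look for a dual certificate lifting a level-$r$ dual-optimal solution to a level-$(r+1)$ dual-feasible one that realizes the power bound.
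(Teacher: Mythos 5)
You have not proved the statement, and you candidly say so yourself --- but it should be stressed that nobody has: this is Conjecture \ref{cnj:alt_obj_strength} of the paper, and the authors state explicitly that they are ``presently only able to prove'' the weaker Theorem \ref{thm:alt_obj_strength}. Your reduction is sound as far as it goes, and it is in fact the \emph{same} construction the paper uses for that weaker theorem: your marginal $g(\bm{x}_1,\dots,\bm{x}_r)=\sum_{\bm{y}}f(\bm{x}_1,\dots,\bm{x}_r,\bm{y})$, normalized by $g(\bm{0})$, is exactly the paper's $g=\frac{1}{v_2}\sum_{\bm{y}}f(\cdot,\bm{y})$, where by \eqref{eq:delsarte_lin:C4} the normalizer $g(\bm{0})=v_2=\phi(f)$ is the \eqref{eq:delsarte_lin:objective}-value of $f$; your feasibility checks (in particular $\mathcal{F}_S(g)=2^{n}\mathcal{F}_{S\cup\{r+1\}}(f)(\cdot,\bm{0})\geq 0$) are correct and mirror the computation in the proof of Theorem \ref{thm:weak_lp_strength}. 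Your key inequality $V_{r+1}\leq V_r\,\phi(f^*)$ then delivers exactly what the paper's argument delivers and no more: together with the observation that $\phi(f^*)\leq \val\DelsarteLinCube{r+1}{n}{d}$, the case split on whether $\phi(f^*)\leq V_{r+1}^{1/(r+1)}$ or not recovers precisely the max-bound of Theorem \ref{thm:alt_obj_strength}.

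The genuine gap is the one you name: the per-solution estimate $\phi(f^*)\leq\Phi(f^*)^{1/(r+1)}$ for an optimizer of the level-$(r+1)$ program. As you observe, this holds with equality on code solutions $f_{\C^{r+1}}$ and survives convex combinations by concavity of $t\mapsto t^{1/(r+1)}$, but the LP feasible region is a strict relaxation of that hull, and no argument from \eqref{eq:delsarte_lin:C2}--\eqref{eq:delsarte_lin:C4} is known to force it; this is exactly the obstruction that leaves the statement a conjecture rather than a theorem, so your honest assessment of where the difficulty lies is accurate. One smaller correction to a side remark: Fekete's lemma requires submultiplicativity $V_{r+s}\leq V_r V_s$ for \emph{all} pairs $(r,s)$, not only $s=1$, so your bound $V_{r+1}\leq V_r V_1$ alone does not yield convergence of $V_r^{1/r}$. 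The full version does follow from your own method --- marginalize the last $s$ coordinates and bound the normalizer $g(\bm{0})$ by the \eqref{eq:delsarte_lin:objective_r}-value of the restriction $(\bm{y}_1,\dots,\bm{y}_s)\mapsto f^*(\bm{y}_1,\dots,\bm{y}_s,\bm{0},\dots,\bm{0})$, which is feasible at level $s$ --- so the convergence claim is salvageable, though, as you note, it remains weaker than the conjectured monotonicity.
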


Due to the non-linear relation between the two objective functions
we are presently only able to prove the following.
A similar Theorem can likewise be proved 
for the variant where 
\eqref{eq:delsarte_lin:C2_weak} replaces \eqref{eq:delsarte_lin:C2}.
\begin{theorem}\label{thm:alt_obj_strength}
    Let $r,n,d$ be positive integers such that $d\leq n/2$. Then
    \begin{dmath*}
        \left(\val \DelsarteLinCubeMod{
            \eqref{eq:delsarte_lin:objective_r}}{r+1}{n}{d}\right)^{1/(r+1)}
        \leq 
        \max
        \begin{cases}
            \left(\val \DelsarteLinCubeMod{
                \eqref{eq:delsarte_lin:objective_r}}{r}{n}{d}\right)^{1/r},
            \\
            \val \DelsarteLinCube{r+1}{n}{d}
        \end{cases}
    \end{dmath*}
\end{theorem}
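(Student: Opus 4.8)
The plan is to reduce the assertion to a single product inequality between the three optimal values. Write $M_j=\val\DelsarteLinCubeMod{\eqref{eq:delsarte_lin:objective_r}}{j}{n}{d}$ for the level-$j$ program under objective \eqref{eq:delsarte_lin:objective_r}, and $D_{r+1}=\val\DelsarteLinCube{r+1}{n}{d}$. I claim it suffices to prove
\[ M_{r+1}\le M_r\cdot D_{r+1}. \]
Indeed, putting $m=\max\{M_r^{1/r},D_{r+1}\}$ gives $M_r\le m^{r}$ and $D_{r+1}\le m$, whence $M_{r+1}\le m^{r+1}$ and $M_{r+1}^{1/(r+1)}\le m$, which is exactly the claimed inequality. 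Thus the whole theorem rests on this one estimate, and the $\max$ reflects that the clean bound is only reachable through a case distinction, as explained below.

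To bound $M_{r+1}$ I would fix an optimal $f:\cube{(r+1)n}\to\R$ and project it onto its last coordinate block in two ways. First, the restriction $g(\bm x_1,\dots,\bm x_r)=f(\bm x_1,\dots,\bm x_r,\bm 0)$ is feasible for the level-$r$ program: \eqref{eq:delsarte_lin:C1}, \eqref{eq:delsarte_lin:C3}, \eqref{eq:delsarte_lin:C4} are immediate, and \eqref{eq:delsarte_lin:C2} follows from the identity $\mathcal F_S(g)(\bm x_1,\dots,\bm x_r)=\mathcal F_S(f)(\bm x_1,\dots,\bm x_r,\bm 0)\ge 0$ for $S\subseteq[r]$. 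Hence its objective $S_0:=\sum_{X'\in\cube{r\times n}}g(X')$ satisfies $S_0\le M_r$. Second, consider the marginal $h(\bm y)=\sum_{X'\in\cube{r\times n}}f(X',\bm y)$ and its normalization $F=h/h(\bm 0)$ (note $h(\bm 0)=S_0$). I would check that $F$ is feasible for $\DelsarteCube{n}{d}$: $F(\bm 0)=1$; $F\ge 0$ since $f\ge0$; $\widehat F\ge 0$ because $\widehat h(\bm z)$ equals the sum over the first $r$ blocks of the nonnegative function $\mathcal F_{\{r+1\}}(f)(\,\cdot\,,\bm z)$; and the Delsarte form of \eqref{eq:delsarte_lin:C3} holds by \eqref{eq:delsarte_lin:C3_span}, since $\bm y$ lies in the row span. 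Summing $h$ over $\bm y$ gives the clean identity
\[ M_{r+1}=\sum_{\bm y\in\cube n}h(\bm y)=S_0\cdot\!\!\sum_{\bm y\in\cube n}F(\bm y)=S_0\cdot\val(F), \]
so, using $S_0\le M_r$, the product inequality follows the moment one shows $\val(F)\le D_{r+1}$.

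The hard part is precisely this last step. Delsarte-feasibility of $F$ only yields $\val(F)\le\val\DelsarteCube{n}{d}=\val\DelsarteLinCube{1}{n}{d}$, the level-$1$ bound, which is too weak (it dominates $D_{r+1}$). To reach $D_{r+1}$ one would like to lift $F$ to a level-$(r+1)$ solution of equal single-coordinate objective, the natural candidate being a $\GL{r+1}{2}$-symmetrization of $F\otimes\delta_{\bm 0}^{\otimes r}$. The unsymmetrized tensor satisfies \eqref{eq:delsarte_lin:C1}, \eqref{eq:delsarte_lin:C2}, \eqref{eq:delsarte_lin:C3} but not \eqref{eq:delsarte_lin:C4}, while the averaging over $\GL{r+1}{2}$ that would restore \eqref{eq:delsarte_lin:C4} destroys \eqref{eq:delsarte_lin:C2}: the cone cut out by \eqref{eq:delsarte_lin:C2} is \emph{not} invariant under the action $u\mapsto u\circ T$, $T\in\GL{r+1}{2}$. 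For instance (take $r=1$), a shear carries $F\otimes\delta_{\bm 0}$ to $F(\bm x_1)\delta_{\bm 0}(\bm x_1+\bm x_2)$, whose transform $\mathcal F_{\{2\}}$ equals $2^{-n}(-1)^{\langle\bm x_1,\bm x_2\rangle}F(\bm x_1)$, of indefinite sign. This is the crux obstruction: it blocks both the conjecture and the clean bound $\val(F)\le D_{r+1}$, and it forces the weaker case split — when the rank-$\le 1$ (single-coordinate) mass dominates, $M_{r+1}$ is controlled directly by $D_{r+1}^{r+1}$, and otherwise it is controlled by the restriction through $M_r^{(r+1)/r}$, the boundary between the two regimes producing the maximum. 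I expect the genuinely technical obstacle to be making the second regime rigorous: extracting from \eqref{eq:delsarte_lin:C2} a correlation (log-submodularity) inequality that bounds the top-rank contributions to $M_{r+1}$ by the level-$r$ value $M_r$.
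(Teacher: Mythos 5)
Your opening reduction is sound: the theorem would indeed follow from the product inequality $M_{r+1}\le M_r\cdot D_{r+1}$, and that inequality is in fact true. But your proposal does not prove it, and the place where it stalls is not an intrinsic obstruction of the problem --- it is an artifact of your decomposition. Writing $M_{r+1}=S_0\cdot\val(F)$ with $S_0=\sum_{X'}f(X',\bm{0})\le M_r$ forces the remaining factor, the Delsarte value of the normalized marginal $F$, to carry the comparison with $D_{r+1}$; as you yourself observe, Delsarte feasibility of $F$ only gives $\val(F)\le\val\DelsarteCube{n}{d}$, which is too weak, and your closing paragraph (two regimes, a hoped-for log-submodularity inequality) describes what would be needed rather than proving anything. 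As it stands, the proof is incomplete.

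The missing idea is to swap which factor carries which bound. Let $f$ be optimal for $\DelsarteLinCubeMod{\eqref{eq:delsarte_lin:objective_r}}{r+1}{n}{d}$ and set $v_2=\sum_{\bm{x}}f(\bm{x},\bm{0},\dots,\bm{0})$, the \eqref{eq:delsarte_lin:objective}-value of $f$ \emph{itself}. Since the two level-$(r+1)$ programs share the same feasible set, $v_2\le D_{r+1}$ with no work at all --- this is how $D_{r+1}$ should enter, rather than through $F$. Now take the full marginal over the last block, normalized by $v_2$: $g(X')=\frac{1}{v_2}\sum_{\bm{y}}f(X',\bm{y})$. This $g$ is feasible for the level-$r$ program: constraints \eqref{eq:delsarte_lin:C2}, \eqref{eq:delsarte_lin:C3}, \eqref{eq:delsarte_lin:C4} pass to it by exactly the kind of verifications you already carried out, and the normalization \eqref{eq:delsarte_lin:C1} is where \eqref{eq:delsarte_lin:C4} earns its keep: $\sum_{\bm{y}}f(\bm{0},\dots,\bm{0},\bm{y})=\sum_{\bm{y}}f(\bm{y},\bm{0},\dots,\bm{0})=v_2$, via the permutation in $\GL{r+1}{2}$ moving the last row to the first. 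The \eqref{eq:delsarte_lin:objective_r}-value of $g$ is $M_{r+1}/v_2$, so $M_{r+1}/v_2\le M_r$, i.e.\ $M_{r+1}\le M_r v_2\le M_r D_{r+1}$ --- your product inequality. This is precisely the paper's construction: the paper packages it as a case split on whether $v_1\le v_2$ (where $v_1$ is the $(r+1)$-st root of the \eqref{eq:delsarte_lin:objective_r}-value of $f$), concluding the maximum directly, but the function $g$ and the two bounds are identical. In particular your diagnosed ``crux obstruction'' --- that $\GL{r+1}{2}$-symmetrization of $F\otimes\delta_{\bm 0}^{\otimes r}$ destroys \eqref{eq:delsarte_lin:C2}, blocking $\val(F)\le D_{r+1}$ --- may be a genuine difficulty, but it is irrelevant here: no lifting of $F$ to level $r+1$ is needed anywhere.
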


\begin{proof}
    Let $f:\cube{(r+1)n}\to\R$ be a feasible solution to
    $\DelsarteLinCubeMod{\eqref{eq:delsarte_lin:objective_r}}{r+1}{n}{d}$. Then
    $f$ is also a feasible solution to $\DelsarteLinCube{r+1}{n}{d}$.
    Let 
    \begin{align*}
        v_1 &= \left(\sum f(\bm{x}_1,\dots,\bm{x}_{r+1})\right)^{1/(r+1)}
    \\
        v_2 &= \sum f(\bm{x},\bm{0},\dots,\bm{0})
    \end{align*}
    where the sums are over $\bm{x}_1,\dots,\bm{x}_{r+1}\in\cube{n}$ and
    over $\bm{x}\in\cube{n}$, respectively.
    
    If $v_1\leq v_2$
    then we are done, because $v_2$ is not greater than
    the optimum of $\DelsarteLinCube{r+1}{n}{d}$.
    
    Otherwise, $v_1 >
    v_2$. Define
    $g:\cube{rn}\to\R$ as
    \[
        g(\bm{x}_1,\dots,\bm{x}_r)
            = \frac{1}{v_2}
            \sum_{\bm{y}\in\cube{n}}f(\bm{x}_1,\dots,\bm{x}_r,\bm{y})
    \]
    It is not hard to verify that $g$ is a feasible solution to
    $\DelsarteLinCubeMod{\eqref{eq:delsarte_lin:objective_r}}{r}{n}{d}$.
    Now consider its value:
    \[
        \left(\sum g(\bm{x}_1,\dots,\bm{x}_{r})\right)^{1/r}
        = \left(\frac{v_1^{r+1}}{v_2}\right)^{1/r}
        \\
        \geq v_1
    \]
    and the value of $g$ is at most the optimal value
    of $\DelsarteLinCubeMod{\eqref{eq:delsarte_lin:objective_r}}{r}{n}{d}$.
\end{proof}

\subsection{Approximate Completeness}
\label{subsec:approx_completeness}

Coregliano et.\ al.\ \cite{coregliano2021complete} prove that for $r$ large enough, the
LP family with the objective function \eqref{eq:delsarte_lin:objective_r}
converges to $A^{\Lin}(n,d)^r$. For binary linear codes, it can be stated as follows:
\begin{theorem}[Approximate Completeness]
    \label{thm:approx_completeness}
    Let $\ve \in (0,1)$ and $r \geq 2n^2/\log_2(1+\ve)$. Then
    \[
        \left(\val \DelsarteLinCubeMod{\eqref{eq:delsarte_lin:objective_r}}{r}{n}{d}\right)^{1/r}
        \leq (1+\ve) A^{\Lin}(n,d)
    \]
\end{theorem}
The proof in \cite{coregliano2021complete} is based on an SDP formulation which 
is equivalent to the LP family. 
The idea of the proof is to upper-bound the variables, 
and then count the non-zero variables. 
Our proof follows the same idea, without using an SDP.
The following proposition provides upper bounds on the variables,
which is followed by a count of the non-zero variables.
\begin{proposition}
    \label{prop:variables_upper_bound}
    Let $f:\cube{n}\to\R$ such that $f(0) = 1$ and $\hat{f}\geq 0$. Then $f\leq 1$.
\end{proposition}
\begin{proof}
    Let $0\neq \bm{x}\in \cube{n}$. Since $\hat{f}\geq 0$, we have
    \[
        0 \leq \sum_{\bm{y}: \langle \bm{y}, \bm{x} \rangle_{\mathbb{F}_2} = 1} \hat{f}(\bm{y})
        = \sum_{\bm{y}: \langle \bm{y}, \bm{x} \rangle_{\mathbb{F}_2} = 1} 
            \sum_{\bm{z}\in\cube{n}} \chi_{\bm{y}}(\bm{z}) f(\bm{z})
    \]
    For every $\bm{y}$ in the sum, there holds $\chi_{\bm{y}}(\bm{x}) = -1$ and $\chi_{\bm{y}}(0)=1$. Hence,
    \[
        0 \leq 2^{n-1}f(0) - 2^{n-1}f(\bm{x}) + 
        \sum_{\bm{z}\neq 0,\bm{x}} f(\bm{z})
        \sum_{\bm{y}: \langle \bm{y}, \bm{x} \rangle_{\mathbb{F}_2} = 1}\chi_{\bm{y}}(\bm{z})
    \]
    We complete the proof by showing that the last term vanishes.
    So, let $[\bm{x},\bm{z}]$ be the $2\times n$ matrix whose rows are $\bm{x}$ and $\bm{z}$. 
    The action of multiplying $[\bm{x},\bm{z}]$ by $\bm{y}\in\cube{n}$ divides the $n$-dimensional
    cube into cosets in $\mathbb{F}_2^2$. If $\bm{x}\neq \bm{z}$ and both are non-zero,
    then each coset has cardinality $2^{n-2}$.
    The inner sum is over the cosets $(1,0)$ and $(1,1)$. If $\bm{y}$ is in the first
    coset, then $\chi_{\bm{z}}(\bm{y})=1$, and if it is in the second then
    $\chi_{\bm{z}}(\bm{y})=-1$. In total, the sum vanishes.
    
\end{proof}
\begin{proof}[Proof of Theorem \ref{thm:approx_completeness}]
    Let $f$ be a solution to
    $\DelsarteLinCubeMod{\eqref{eq:delsarte_lin:objective_r}}{r}{n}{d}$.
    By Proposition \ref{prop:variables_upper_bound}, $f\leq 1$.
     
    Let $k_0$ be the largest possible dimension of a binary linear code with length $n$ and distance $d$,
    namely $2^{k_0} = A^{\Lin}(n,d)$. Then $f$ vanishes of every $r\times n$ binary matrix of $\mathbb{F}_2$-rank larger than $k_0$.
    Then, the value of the LP corresponding to $f$ is at most $\sum_{k=0}^{k_0}\gamma_{n,r,k}$,
    where $\gamma_{n,r,k}$ is the number of such matrices of rank exactly $k$.
    
    We next derive an upper bound on $\gamma_{n,r,k}$. There are exactly \mbox{$\prod_{i=1}^{k}(2^n-2^{i})\le 2^{nk}$} ordered bases 
    of $k$-dimensional subspaces of $\mathbb{F}^n_2$.
    There are $r(r-1)\cdots(r-k+1)\le r^k$ possible ways to place the chosen ordered
    base in an $r\times n$ matrix, and then $2^{k}$ options to choose each of the remaining rows without increasing the rank.
    Hence, $\gamma_k \leq 2^{nk} r^{k} 2^{k(r-k)}$, and
    \begin{align*}
        \val \DelsarteLinCubeMod{\eqref{eq:delsarte_lin:objective_r}}{r}{n}{d}
        &\leq \sum_{k=0}^{k_0}  2^{nk} r^{k} 2^{kr}
        \\
        &\leq (k_0+1) 2^{k_0(n + r + \log_2(r))}
        \\
        &\leq 2^{n^2 + n\log_2(r) + \log_2(n+1)} A^\Lin(n,d)^r
        \\
        &\leq (1+\ve)^{r} A^\Lin(n,d)^r
    \end{align*}
    in the last inequality we use the assumption that $r\geq 2n^2/\log_2(1+\ve)$.
\end{proof}

\section{Symmetrized Linear Programs}

\label{section:symmetrized_lps} 

Due to the inherent symmetries of the LPs from section \ref{section:new_lps}
they can be symmetrized without affecting the objective function.
The advantage is that the symmetrized LP is significantly smaller than the original form.
This is what we consider in this section.

Let $\mathfrak{S}_n$ be the symmetric group on $n$ elements. It
acts on $\cube{r\times n}$ by column permutations:
\[
    \sigma \cdot X = [ \bm{\xi}_{\sigma(1)},\dots, \bm{\xi}_{\sigma(n)} ]
\]
where $\bm{\xi}_1,\dots,\bm{\xi}_n$ are the columns of $X\in\cube{r\times n}$,
and $\sigma \in \mathfrak{S}_n$. It also acts on functions $f:\cube{r\times n}\to \R$
via $(\sigma\circ f)(X) = f(\sigma \cdot X)$. 

We say that a solution $f$ to $\DelsarteLinCube{r}{n}{d}$ is {\em symmetric}
if it is constant on $\mathfrak{S}_n$-orbits, i.e., if $f = \sigma \circ f$
for every $\sigma \in\mathfrak{S}_n$. Symmetric solutions can clearly be
described more concisely, and as we observe below, there exist optimal symmetric solutions.

Generally speaking, suppose that the group $G$ acts on the variables of 
a linear program $\mathcal{P}$. We say that $f$,
a feasible solution of $\mathcal{P}$
is $G$-{\em invariant} if
$g\circ f$ is feasible and $\val(g\circ f) = \val(f)$, for every $g\in G$.
If every  feasible solution is invariant,
we say that $\mathcal{P}$
is $G$-invariant.
An invariant solution $f$ need not be symmetric, but averaging can yield 
a symmetric solution via
\[
    \overline{f}\coloneqq |G|^{-1} \sum_{g\in G} g\circ f
\]
By linearity and convexity, $\overline{f}$ is feasible and has the same value as $f$.
Consequently, a $G$-invariant LP has a symmetric optimal solution.

Let us verify that $\DelsarteLinCube{r}{n}{d}$ is $\mathfrak{S}_n$-invariant.
Let $f$ be a feasible solution and $\sigma \in\mathfrak{S}_n$. 
\begin{itemize}
    \item[\eqref{eq:delsarte_lin:C1}] $f(\sigma \cdot \bm{0}) = f(\bm{0}) = 1$.
    \item[\eqref{eq:delsarte_lin:C2}] By Proposition \ref{prop:partial_fourer_under_Sn} from Section \ref{section:partial_fourier_transform_symmetries} below,
    if $\mathcal{F}_S(f) \geq 0$ then also $\mathcal{F}_S(\sigma \circ f)\geq 0$.
    \item[\eqref{eq:delsarte_lin:C3}] Row weights are invariant under column permutations.
    \item[\eqref{eq:delsarte_lin:C4}] Permuting of the columns of $X$ is equivalent to multiplication from the right
    by a permutation matrix $P$.
    Since matrix multiplication is associative, 
    \[
        (\sigma \circ f)(TX)=f(T(XP))=f(XP)=(\sigma\circ f)(X)
    \]
    for every $T\in\GL{r}{2}$.
    \item[\eqref{eq:delsarte_lin:objective}]  (also \eqref{eq:delsarte_lin:objective_r}) Permutation only affects the order of summation, but not the total sum.
\end{itemize}
Hence, $\sigma \circ f$ is a feasible solution with the same value as $f$.

Therefore, there is no loss in restricting to symmetric solutions of $\DelsarteLinCubeNoParams$,
i.e., to solutions $f$ that are constant on the orbits $\cube{r\times n}/\mathfrak{S}_n$.
Such solutions can be expressed as a linear combination of orbit indicators:
\[
    f(X) = \sum_{Orb\in\cube{r\times n}/\mathfrak{S}_n} \varphi_{Orb} \cdot \1_{Orb}(X)
\]
where $\1_{Orb}:\cube{r\times n}\to\{0,1\}$ is the indicator function of the set 
$Orb\in\cube{r\times n}/\mathfrak{S}_n$, and
$(\varphi_{Orb})$ are real numbers.
To exploit this symmetry we 
reformulate the LP in terms of $(\varphi_{Orb})$.

The following definition will be useful in depicting the set of orbits.
\begin{definition}\label{def:column_frequency} Let $\bm{\xi}_1,\dots,\bm{\xi}_n\in\cube{r}$
be the columns of $X \in \cube{r\times n}$. The
    \textbf{column enumerator} of $X$ counts how many times
    each vector in $\cube{r}$ appears as a column in $X$:
    \[
        \cf_X\in \N^{2^r},\quad
        \cf_X(\bm{u}) = |\{ 1\leq i \leq n: \bm{\xi}_i = \bm{u}\}|
    \]
\end{definition}
Observe that when $r=1$, $\cf_{\bm{x}}(1) = |\bm{x}|$ and $\cf_{\bm{x}}(0) =
n-|\bm{x}|$.

The column enumerator of a matrix clearly determines its orbit, i.e.,
$\mathfrak{S}_n \cdot \x = \mathfrak{S}_n\cdot \y \iff
\cf_X=\cf_Y$. The set of orbits $\cube{r\times n}/\mathfrak{S}_n$ is therefore
isomorphic to the set of all possible column
enumerators, which we denote by
$\mathcal{I}_{r,n}$, 
\begin{equation}
    \label{eq:symmetrized_index_set}
    \mathcal{I}_{r,n} \coloneqq
    \{\bm{\alpha}=(\alpha_0,\dots,\alpha_{2^r-1}):
    \alpha_i \in \N,~
    \sum_{i=0}^{2^r-1}\alpha_i = n \}
\end{equation}
Equivalently, it is the set of ordered partitions of $n$ into $2^r$ parts. 
In the sequel, we will introduce a different equivalent way of looking at $\mathcal{I}_{r,n}$.

The level-set indicator function of ${\bm{\alpha}} \in \mathcal{I}_{r,n}$ is defined via
\[
    L_{\bm{\alpha}}:\cube{r\times n}\to \{0,1\},\quad
    L_{\bm{\alpha}}(X) = 
    \begin{cases}
        1 & \cf_X = {\bm{\alpha}} \\
        0 & \text{o/w}
    \end{cases}
\]
This allows us to express any symmetric solution to $\DelsarteLinCube{r}{n}{d}$ as follows:
\[
    f = \sum_{\bm{\alpha}\in\mathcal{I}_{r,n}} \varphi_{\bm{\alpha}} L_{\bm{\alpha}}
\]
We need to introduce some more notation. Let
$\epsilon_{\bm{u}}:\cube{r}\to\R$ be the indicator of $\bm{u}$. Namely,
$\epsilon_{\bm{u}}(\bm{v}) = 1$ if $\bm{v}=\bm{u}$ and $0$ otherwise, for
$\bm{v}\in\cube{r}$. Note the distinction between indicators
in $\R^{\cube{r}}$, and those in
$\cube{r}$, which we denote by $\bm{e}_i$, for $i=1,\dots,r$. We write, for
example,
\[
    \bm{\alpha} = 
    (n-k)\epsilon_{\bm{0}} + k\epsilon_{\bm{e}_i}
    \in \mathcal{I}_{r,n}
\]
Here, $\bm{0},\bm{e}_i\in\cube{r}$, and $k$ is an integer between $0$ and $n$.

Every $\bm{\alpha}\in\mathcal{I}_{r,n}$ is also considered as a real function
on $\cube{r}$. Namely, $\alpha_{\bm{u}}$ is synonymous with $\alpha_i$, where
$\bm{u}\in\cube{r}$ is the binary representation of $i\in\N$.
As a Boolean function, we apply Fourier transform to $\bm{\alpha}$:
$\hat{\bm{\alpha}}_{\bm{u}} = \langle \chi_{\bm{u}},\bm{\alpha}\rangle$,
for every $\bm{u}\in\cube{r}$.

Let us now rewrite $\DelsarteLinCube{r}{n}{d}$ in terms of 
$(\varphi_{\bm{\alpha}})_{\bm{\alpha}\in\mathcal{I}_{r,n}}$.
\begin{itemize}
    \item[\eqref{eq:delsarte_lin:C1}] The orbit of $\bm{0}\in\cube{r\times n}$ contains only the element $\bm{0}$,
    so $f(\bm{0})=1$ implies $\varphi_{n\epsilon_{\bm{0}}}=1$.
    \item[\eqref{eq:delsarte_lin:C2}] By linearity of (partial) Fourier transform,
    \[
        \mathcal{F}_S(f)(X) = 
        \sum_{\bm{\alpha}\in\mathcal{I}_{r,n}} \varphi_{\bm{\alpha}} \cdot \mathcal{F}_S(L_{\bm{\alpha}})(X)
    \]
    for every $S\subset[r]$ and $X\in\cube{r\times n}$.

    In Section \ref{section:partial_fourier_transform_symmetries} below, we show that 
    $\mathcal{F}_S(L_{\bm{\alpha}})(X)$ depends only on the column enumerator of $X$.

    When $S=[r]$, namely for $\hat{L}_{\bm{\alpha}}(X)$, it turns out that it is a
    multivariate polynomial in $\Gamma_X$. 
    In Section \ref{section:Krawtchouk_properties} we denote $\hat{L}_{\bm{\alpha}}(X) 
    \coloneqq K_{\bm{\alpha}}(\Gamma_X)$, and show that $\{K_{\bm{\alpha}}\}_{\bm{\alpha}\in\mathcal{I}_{r,n}}$
    is a set of polynomials over $\R^{2^r}$ orthogonal w.r.t.\ the multinomial distribution. These
    polynomials are called {\em multivariate Krawtchouks}.

    For $S\neq [r]$, we denote $\mathcal{F}_S(L_{\bm{\alpha}})(X) \coloneqq K^S_{\bm{\alpha}}(\Gamma_X)$.
    We call the set $\{K^S_{\bm{\alpha}}\}_{\bm{\alpha}\in\mathcal{I}_{r,n}}$ {\em partial Krawtchouks}.
    These are orthogonal functions w.r.t.\ an appropriate measure, though not polynomials. 
    In Section \ref{section:Krawtchouk_properties}
    we describe these functions as products of multivariate Krawtchouks.

    Constraint \eqref{eq:delsarte_lin:C2} implies
    \[
        \sum_{\bm{\alpha}\in\mathcal{I}_{r,n}} \varphi_{\bm{\alpha}} K^S_{\bm{\alpha}} \geq 0
    \]
    for every $S\subset[r]$.

    \item[\eqref{eq:delsarte_lin:C3}] The following proposition expresses the weights of the row space of $X$ in terms of its column enumerator.
    \begin{proposition}\label{prop:weight_equivalence} For every
        $X=(x_{i,j})\in\cube{r\times n}$ and $\bm{u}\in\cube{r}$,
        \[
            |\bm{u}^\intercal X| =\frac{1}{2}\big( n - 2^{r} \widehat{\cf}_X(\bm{u})\big)
        \]
        where $\widehat{\cf}_X(\bm{u})$ is the Fourier transform of $\cf_X$ at
        $\bm{u}$.
    \end{proposition}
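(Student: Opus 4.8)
The plan is to expand the left-hand side directly and recognize the Fourier transform of $\cf_X$ on the right. First I would note that the $j$-th coordinate of the row vector $\bm{u}^\intercal X$ equals $\langle\bm{u},\bm{\xi}_j\rangle \bmod 2$, where $\bm{\xi}_1,\dots,\bm{\xi}_n\in\cube{r}$ are the columns of $X$. Thus a coordinate contributes $1$ to the weight precisely when $\langle\bm{u},\bm{\xi}_j\rangle = 1$, so I would encode this event by the indicator
\[
    \frac{1}{2}\bigl(1 - (-1)^{\langle\bm{u},\bm{\xi}_j\rangle}\bigr).
\]

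Summing over $j=1,\dots,n$ then gives
\[
    |\bm{u}^\intercal X| = \frac{n}{2} - \frac{1}{2}\sum_{j=1}^{n}(-1)^{\langle\bm{u},\bm{\xi}_j\rangle}.
\]
The next step is to pass from the sum over columns to a sum over $\cube{r}$, by grouping the columns according to their value. Since exactly $\cf_X(\bm{v})$ columns equal $\bm{v}$, the sum becomes $\sum_{\bm{v}\in\cube{r}}\cf_X(\bm{v})(-1)^{\langle\bm{u},\bm{v}\rangle}$. By the definition of the Fourier transform applied to the function $\cf_X$ on the $r$-dimensional cube, this equals $2^{r}\widehat{\cf}_X(\bm{u})$, the factor $2^{r}$ being the inverse of the normalization $2^{-r}$ in the transform. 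Substituting yields $|\bm{u}^\intercal X| = \tfrac12\bigl(n - 2^{r}\widehat{\cf}_X(\bm{u})\bigr)$, as claimed.

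There is no serious obstacle here; the only point demanding care is the normalization of the Fourier transform. Because $\cf_X$ is a function on $\cube{r}$ rather than on $\cube{n}$, its transform carries the factor $2^{-r}$, and it is precisely this factor that produces the $2^{r}$ appearing in the statement. As a sanity check one can take $\bm{u}=\bm{0}$: the left side is $0$, and since $\widehat{\cf}_X(\bm{0}) = 2^{-r}\sum_{\bm{v}}\cf_X(\bm{v}) = 2^{-r}n$, the right side gives $\tfrac12\bigl(n - 2^{r}\cdot 2^{-r}n\bigr)=0$, consistent with the identity.
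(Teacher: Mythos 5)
Your proof is correct and follows essentially the same route as the paper's: both express each coordinate of $\bm{u}^\intercal X$ via the indicator $\tfrac12\bigl(1-(-1)^{\langle\bm{u},\bm{\xi}_j\rangle}\bigr)$, sum over $j$, group columns by their value in $\cube{r}$, and identify the resulting sum $\sum_{\bm{v}}\cf_X(\bm{v})(-1)^{\langle\bm{u},\bm{v}\rangle}$ with $2^{r}\widehat{\cf}_X(\bm{u})$. Your explicit attention to the $2^{-r}$ normalization and the $\bm{u}=\bm{0}$ sanity check are welcome additions but do not change the argument.
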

    \begin{proof}
        By definition, $\bm{u}^\intercal X \in \cube{n}$ and $|\bm{u}^\intercal X|
        = \sum_{j=1}^{n}(\bm{u}^\intercal X)_j$, where $(\bm{u}^\intercal X)_j$ is
        the $j$-th bit and the sum is over the integers. Concretely, for
        $j=1,\dots,n$:
        \begin{dmath*}
            (\bm{u}^\intercal X)_j
            = \sum_{i:u_i=1}x_{i,j}\bmod 2
            = \frac{1}{2}\Big(1-(-1)^{\sum_{i: u_i=1} x_{i,j}}\Big)
            = \frac{1}{2}\Big(1-(-1)^{\langle \bm{u}, \bm{\xi}_j \rangle}\Big)
        \end{dmath*}
        where $\bm{\xi}_j = (x_{i,j})_{i=1}^{r}$ is the $j$-th column of $X$. Thus
        \[
            |\bm{u}^\intercal X|
            =\frac{1}{2} \sum_{j=1}^{n} \Big(1-(-1)^{\langle \bm{u}, \bm{\xi}_j \rangle}\Big)
        \]
        But $\bm{\xi}_j$ appears $\cf_X(\bm{\xi}_j)$ times in $X$, so grouping the
        summands by column, we have
        \[
            |\bm{u}^\intercal X|
            =\frac{1}{2} \sum_{\bm{v}\in\cube{r}} \cf_X(\bm{v}) (1-(-1)^{\langle \bm{u},\bm{v}\rangle})
            =\frac{1}{2}\big(n - \chi_{\bm{u}}^\intercal \cf_X\big)
        \]
    \end{proof}
    Thus, we require that $\varphi_{\bm{\alpha}}=0$ whenever
    $1\leq \frac{1}{2}\big( n - 2^{r} \widehat{\bm{\alpha}}_{\bm{u}}\big)\leq d-1$
    for some $\bm{u}\in\cube{r}$.

    \item[\eqref{eq:delsarte_lin:C4}] When $X\in\cube{r\times n}$ gets multiplied 
    on the left by $T\in \GL{r}{2}$, its column enumerator, $\cf_X$ gets modified.
    Here we need to define 
    the action of $T$ on $\mathcal{I}_{r,n}$, in a way that is consistent with
    this modification. Indeed, define
    \[
        (T\cdot \bm{\alpha})_{\bm{u}} = \alpha_{T^{-1}\bm{u}}
    \]
    This ensures $T\cdot \Gamma_X = \Gamma_{TX}$.
    
    \item[\eqref{eq:delsarte_lin:objective}] The vector $(\bm{x},0,\dots,0)\in\cube{rn}$ corresponds
    to the matrix $\bm{e}_1 \bm{x}^\intercal\in\cube{r\times n}$. Say $|\bm{x}| = k$. Then,
    its column enumerator is $\cf_{\bm{e}_1 \bm{x}^\intercal} = (n-k)\epsilon_{\bm{0}} + k \epsilon_{\bm{e}_1}$.
    The orbit of $\bm{e}_1 \bm{x}^\intercal$ has cardinality $\binom{n}{k}$. 
    Hence, the objective function becomes
    \[
        \textup{maximize} \quad \sum_{k=0}^{n}\binom{n}{k} \varphi_{(n-k)\epsilon_{\bm{0}} + k \epsilon_{\bm{e}_1}}
    \]
    
    \item[\eqref{eq:delsarte_lin:objective_r}] Summing over the entire set $\mathcal{I}_{r,n}$ with multiplicites,
    \[
        \textup{maximize} \quad \sum_{\bm{\alpha}\in\mathcal{I}_{r,n}} \binom{n}{\bm{\alpha}} \varphi_{\bm{\alpha}}
    \]
    where $\binom{n}{\bm{\alpha}}$ is the multinomial coefficient.
\end{itemize}
Let us now define the symmetrized version of $\DelsarteLinCubeNoParams$.

\begin{definition}\label{def:delsarte_lin} $\DelsarteLin{r}{n}{d}$:

\begin{alignat*}{3}
    & \mathrlap{ \underset{{\varphi:\mathcal{I}_{r,n}\to\R}}
        {\textup{maximize}}\quad
    \sum_{k=0}^{n} 
        \binom{n}{k}
        \varphi_{(n-k)\epsilon_{\bm{0}} + k\epsilon_{\bm{e}_1}}}
    \tag{$Obj_{/\mathfrak{S}_n}$}
    \\
    &\st \\
    &\quad \varphi_{n\epsilon_{\bm{0}}} = 1 \tag{$C1_{/\mathfrak{S}_n}$}
    \\
    &\quad \sum_{{\bm{\alpha}}\in\mathcal{I}_{r,n}} \varphi_{\bm{\alpha}}
        K^{S}_{{\bm{\alpha}}}(\bm{\beta}) \geq 0 \quad 
        &&
        \forall S\subset [r],~
        \bm{\beta}\in \mathcal{I}_{r,n} \tag{$C2_{/\mathfrak{S}_n}$}
        \label{eq:delsarte_lin_sym:C2}
    \\
    &\quad \varphi_{\bm{\alpha}} = 0  
        &&\text{if } 
        1\leq \frac{1}{2}(n-2^r \hat{\bm{\alpha}}_{\bm{e}_1})\leq d-1
        \tag{$C3_{/\mathfrak{S}_n}$} \label{eq:delsarte_lin_sym:C3} \\
    &\quad \varphi_{\bm{\alpha}} = \varphi_{T\cdot {\bm{\alpha}}} 
        && \forall T
        \in \GL{r}{2}
        \tag{$C4_{/\mathfrak{S}_n}$} 
\end{alignat*}
\end{definition}
We also mention 
two important variations, \eqref{eq:delsarte_lin:C2_weak} and
\eqref{eq:delsarte_lin:objective_r}:
\begin{alignat*}{2}
    & \maximize\quad
    \sum_{\bm{\alpha}\in\mathcal{I}_{r,n}} \varphi_{\bm{\alpha}}
    \tag{$Obj'_{/\mathfrak{S}_n}$}
    \\
    & \varphi \geq 0;\quad
    \sum_{{\bm{\alpha}}\in\mathcal{I}_{r,n}} \varphi_{\bm{\alpha}}
    K_{{\bm{\alpha}}}(\bm{\beta}) \geq 0 \quad 
    & \forall 
    \bm{\beta}\in \mathcal{I}_{r,n} \tag{$C2'_{/\mathfrak{S}_n}$}
\end{alignat*}

By the comments from the beginning of this section, we have the following equivalence.
\begin{proposition}\label{prop:delsarte_lin_Sn_equivalence} For every positive
    integers $r,n,d$, such that $d\leq n/2$,
    \[
        \val \DelsarteLin{r}{n}{d}
        = \val \DelsarteLinCube{r}{n}{d}
    \]
\end{proposition}

Note that $\DelsarteLin{1}{n}{d}$ is identical to Delsarte's LP. Observe
that $\mathcal{I}_{1,n}$ is isomorphic to the set $\{0,1,\dots,n\}$. Rewrite
the LP with a new set of variables, $a_k \coloneqq
\binom{n}{k}\varphi_{(n-k)\epsilon_0 + k\epsilon_1}$, for $k=0,1,\dots,n$.
Using the Krawtchouk symmetry identity,
$\binom{n}{j}K_i(j)=\binom{n}{i}K_j(i)$, transform the Krawtchouk constraint
\eqref{eq:delsarte_lin_sym:C2} as follows:
\[
    \sum_{j=0}^{n}\binom{n}{j}^{-1}a_j K_{j}(i)
    = \binom{n}{i}^{-1} \sum_{j=0}^{n}a_j K_{i}(j)
\]
The result is Delsarte's LP:
\begin{definition}\label{def:delsarte_lp} $\Delsarte{n}{d}$:
    \begin{alignat*}{2}
        &\mathrlap{\underset{a_0,\dots,a_n\in\R}
        {\textup{maximize}}\quad \sum_{i=0}^{n} a_i}
            \tag{$obj_{/\mathfrak{S}_n}$}\label{eq:delsarte_lp:objective} \\
        &\st \\
        &\quad a_0 = 1
        \tag{$d1_{/\mathfrak{S}_n}$}\label{eq:delsarte_lp:c1} \\
        &\quad a_i \geq 0; \quad
            \sum_{j=0}^{n}a_j K_{i}(j) \geq 0, \quad 0\leq i \leq n
        \tag{$d2_{/\mathfrak{S}_n}$}\label{eq:delsarte_lp:c2} \\
        &\quad a_i = 0 
            \quad \text{\textup{if} } 1\leq i\leq d-1
            \tag{$d3_{/\mathfrak{S}_n}$}\label{eq:delsarte_lp:c3}
    \end{alignat*}
\end{definition}

\section{On Partial Fourier Transform}

\label{section:partial_fourier_transform_symmetries}

In this section we explore interactions between the groups $\mathfrak{S}_n$ and
$\GL{r}{2}$ and the partial Fourier transform. The former, $\mathfrak{S}_n$
acts on $\cube{r\times n}$ by permuting columns. The latter, $\GL{r}{2}$
acts on $\cube{r\times n}$ by matrix multiplication from the left. The group of
order-$r$ permutation matrices is a subgroup of $\GL{r}{2}$ which acts on
$\cube{r\times n}$ by permuting rows.

We recall our dual view of $\cube{rn}$, once as a concatenation of $r$ vectors 
$\bm{x}_1,\dots,\bm{x}_r\in\cube{n}$,
and once as a matrix $X\in \cube{r\times n}$ whose rows are the above vectors.
If the group
$G$ acts on $\cube{r\times n}$, and $g\in G$, we denote $(f\circ g)(\x) =
f(g\cdot \x)$ for any $\x\in \cube{r\times n}$ and $f:\cube{rn}\to\R$.

The proofs for some of the following propositions appear in the appendix.

\begin{proposition} \label{prop:partial_fourer_under_Sn}
    Let $\sigma\in \mathfrak{S}_n$, $\x\in\cube{r\times n}$, $S\subset[r]$, and
    $f:\cube{r\times n}\to\R$. Then,
    \[
        \mathcal{F}_S(f\circ \sigma)
        = \mathcal{F}_S(f) \circ \sigma
    \]
\end{proposition}

\begin{proposition} \label{prop:partial_fourer_under_Sr}
    Let $\pi\in \mathfrak{S}_r$ act on the set $\cube{r\times n}$ by row
    permutation. Let $\x\in\cube{r\times n}$, $S\subset[r]$, and
    $f:\cube{r\times n}\to\R$. Then,
    \[
        \mathcal{F}_S(f\circ \pi)
        = \mathcal{F}_{\pi^{-1}(S)}(f) \circ \pi
    \]
\end{proposition}

\begin{proposition} \label{prop:partial_fourer_under_GLr}
    Let $T\in \GL{r}{2}$ be the elementary matrix of row addition, mapping
    $\bm{e}_i\mapsto \bm{e}_i+\bm{e}_j$, for some $i,j\in [r]$, $i\neq j$,
    and $\bm{e}_k\mapsto
    \bm{e}_k$ for $k\neq i$, where $\bm{e}_k\in\cube{r}$ is the $k$-th standard basis
    vector. Let $X\in\cube{r\times n}$, $S\subset[r]$, and $f:\cube{r\times
    n}\to\R$. Then,
    \begin{itemize}
        \item if $i,j\in S$:
        \[
            \mathcal{F}_S(f\circ T)
            = \mathcal{F}_S(f)\circ T^{\intercal}
        \]
        \item if $i,j\notin S$:
        \[
            \mathcal{F}_S(f\circ T)
            = \mathcal{F}_S(f) \circ T
        \]
        \item if $i\in S, j\notin S$:
        \[
            \mathcal{F}_S(f\circ T)(X)
            = \chi_{\bm{x}_i}(\bm{x}_j) \mathcal{F}_S(f)(X)
        \]
    \end{itemize}
\end{proposition}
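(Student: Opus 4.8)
The plan is to exploit the tensor-product structure of $\mathcal{F}_S$ together with a single change of variables in the defining sum. Recall from Definition~\ref{def:partial_fourier_transform} that $\mathcal{F}_S$ acts independently on each of the $r$ coordinates of $\cube{r\times n}$: on coordinate $k$ it is the ordinary Fourier transform on $\cube{n}$ when $k\in S$, and the identity when $k\notin S$ (this is exactly the content of $\mathcal{F}_\emptyset(g)=g$ and $\mathcal{F}_{\{i,j\}}=\mathcal{F}_{\{i\}}\mathcal{F}_{\{j\}}$). Since $T$ leaves every row except row $j$ untouched, and in particular fixes all rows indexed by $[r]\setminus\{i,j\}$, the per-coordinate operators for $k\notin\{i,j\}$ pass through the action of $T$ and factor out identically on both sides. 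First I would record this reduction, so that it suffices to treat the two coordinates $i$ and $j$, i.e.\ the case $r=2$.

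For the reduced problem the key observation is that left multiplication by $T$ sends the $j$-th row $\bm{y}_j$ to $\bm{y}_j+\bm{y}_i$ while fixing $\bm{y}_i$, so that $(f\circ T)$ evaluated at a matrix with rows $\bm{y}_i,\bm{y}_j$ equals $f$ evaluated at the rows $\bm{y}_i,\bm{y}_j+\bm{y}_i$. Substituting this into the definition of $\mathcal{F}_S(f\circ T)$ and performing the bijective change of variables $\bm{y}_j\mapsto\bm{z}_j\coloneqq\bm{y}_j+\bm{y}_i$ (for each fixed $\bm{y}_i$ this is a bijection of $\cube{n}$, and it restores the argument of $f$ to a free summation variable) turns the $f$-factor back into the original $f$ while perturbing only the character/delta weights attached to coordinates $i$ and $j$. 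The entire proof then reduces to tracking these two weights.

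I would then split into the three cases by how the weight on coordinate $j$ behaves under $\bm{y}_j=\bm{z}_j+\bm{y}_i$. When $i,j\in S$ both weights are characters, and the multiplicativity $\chi_{\bm{x}_j}(\bm{z}_j+\bm{y}_i)=\chi_{\bm{x}_j}(\bm{z}_j)\chi_{\bm{x}_j}(\bm{y}_i)$ lets the stray factor $\chi_{\bm{x}_j}(\bm{y}_i)$ merge with $\chi_{\bm{x}_i}(\bm{y}_i)$ into $\chi_{\bm{x}_i+\bm{x}_j}(\bm{y}_i)$; this is precisely the weight obtained by evaluating $\mathcal{F}_S(f)$ at $T^{\intercal}X$, whose $i$-th row is $\bm{x}_i+\bm{x}_j$. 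When $i,j\notin S$ both weights are Kronecker deltas, and composing $\delta_{\bm{x}_j}(\bm{z}_j+\bm{y}_i)$ with $\delta_{\bm{x}_i}(\bm{y}_i)$ pins $\bm{z}_j=\bm{x}_i+\bm{x}_j$, which is exactly evaluation of $\mathcal{F}_S(f)$ at $TX$. In the mixed case, where the character sits on the row that gets added into ($j$) and the delta on the row added from ($i$), the delta collapses the sum by forcing $\bm{y}_i=\bm{x}_i$, and the shift then replaces the surviving character argument by $\bm{y}_j+\bm{x}_i$, which by multiplicativity splits off the constant factor $\chi_{\bm{x}_i}(\bm{x}_j)$ and leaves $\mathcal{F}_S(f)(X)$ behind.

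The main obstacle is precisely this mixed case. Unlike the other two, the substitution there does not merely relabel the evaluation point but leaks a cross-term that must be shown to depend only on the fixed data $\bm{x}_i,\bm{x}_j$ and not on the surviving integration variable, so that it can legitimately be pulled out as a global scalar. Crucially, the two indices are \emph{not} interchangeable here, because the shift is directed into row $j$: one must keep careful track of which of the two coordinates is summed out by its delta and which is fixed, and verify that the emergent character factors as $\chi_{\bm{x}_i}(\bm{x}_j)$ rather than producing a genuine shift of the argument of $\mathcal{F}_S(f)$. Getting this bookkeeping right is the one step that requires real care; once the tensor reduction and the substitution are in place, the two pure cases are routine.
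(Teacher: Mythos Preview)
Your approach is essentially the paper's: expand the defining sum for $\mathcal{F}_S(f\circ T)$, perform the single substitution that restores the argument of $f$, and then do a case analysis on the product $\psi^{(i)}_{\bm{x}_i}\psi^{(j)}_{\bm{x}_j}$ according to which of $i,j$ lie in $S$. The paper does not explicitly reduce to $r=2$ first, but it likewise observes that only the $i$- and $j$-weights are affected, so the content is identical.

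One discrepancy to flag in the mixed case. You read $T\bm{e}_i=\bm{e}_i+\bm{e}_j$ as the linear map $T=I+E_{ji}$, so that $TY$ modifies row $j$; the paper's proof instead treats $T$ as sending row $i$ to $\bm{y}_i+\bm{y}_j$ (it writes $f(\bm{y}_1,\dots,\underbrace{\bm{y}_i+\bm{y}_j}_{i},\dots)$). Under your reading, the case that collapses to the scalar factor $\chi_{\bm{x}_i}(\bm{x}_j)$ is the one with the delta on $i$ and the character on $j$, i.e.\ $i\notin S$, $j\in S$, which is the \emph{opposite} labeling from the proposition's third bullet $i\in S$, $j\notin S$. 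Your computation is correct, but with your convention for $T$ it proves the third item with $i$ and $j$ exchanged; the paper's convention is what makes the case labels come out as stated.
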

Note that do not consider the case $i\notin S, j\in S$, since the expression
does not simplify in that case.

\begin{lemma}\label{lemma:C456_equivalence} Let $f:\cube{r\times n}\to \R$. The
    following are equivalent:
    \begin{enumerate}
        \item \label{C456_equivalence:item1} For every $T\in\GL{r}{2}$,
        \[f = f\circ T.\]
        \item \label{C456_equivalence:item2} For every $S\subset[r]$, 
        \[ \mathcal{F}_S(f)(\bm{x}_1,\dots,\bm{x}_r) = 0,\]
        if 
        $\langle \bm{x}_i, \bm{x}_j \rangle = 1 \bmod 2$ for some
        $i\in S$ and $j\in [r]\setminus S$.
        \item \label{C456_equivalence:item3} For every $S\subset[r]$, 
        \[
            \mathcal{F}_{S}(f) =
            \mathcal{F}_{S}(f) \circ (T_1 T_2)\]
        if $T_1,T_2 \in \GL{r}{2}$, and
        $T_1 \bm{e}_i = \bm{e}_i$ for every $i\in S$, $T_2 \bm{e}_i = \bm{e}_i$
        for every $i\in [r]\setminus S$.
    \end{enumerate}
\end{lemma}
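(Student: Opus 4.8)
The plan is to prove the chain (1)$\Rightarrow$(2)$\Rightarrow$(1) together with (1)$\Leftrightarrow$(3), throughout reducing statements about $\GL{r}{2}$ to its generating set of elementary row-addition matrices $T=E_{i\to i+j}$ and invoking Proposition~\ref{prop:partial_fourer_under_GLr}. For (1)$\Rightarrow$(2), fix $S$ and a pair $i\in S$, $j\in[r]\setminus S$, and apply the third (mixed) case of Proposition~\ref{prop:partial_fourer_under_GLr} to $T=E_{i\to i+j}$. Since $f=f\circ T$ by the first condition, this reads $\mathcal{F}_S(f)(X)=\chi_{\bm{x}_i}(\bm{x}_j)\,\mathcal{F}_S(f)(X)$, so $\bigl(1-\chi_{\bm{x}_i}(\bm{x}_j)\bigr)\mathcal{F}_S(f)(X)=0$; when $\langle\bm{x}_i,\bm{x}_j\rangle$ is odd the character equals $-1$ and hence $\mathcal{F}_S(f)(X)=0$, which is the second condition. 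For (2)$\Rightarrow$(1) I would run this identity in reverse: for an arbitrary transvection $T=E_{i\to i+j}$ choose $S=\{i\}$; the mixed case gives $\mathcal{F}_S(f\circ T)=\chi_{\bm{x}_i}(\bm{x}_j)\,\mathcal{F}_S(f)$, and the second condition makes the right-hand side equal to $\mathcal{F}_S(f)$, so $\mathcal{F}_S(f\circ T)=\mathcal{F}_S(f)$. Because $\mathcal{F}_S$ is a bijection (a Fourier transform on the coordinates of $S$ and the identity elsewhere), this forces $f\circ T=f$, and as transvections generate $\GL{r}{2}$ the first condition follows.

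For (3)$\Rightarrow$(1) I would specialise to $S=\emptyset$: then $\mathcal{F}_\emptyset(f)=f$, the hypothesis on $T_1$ is vacuous, and $T_2=I$ is forced, so the third condition degenerates to $f=f\circ T_1$ for every $T_1\in\GL{r}{2}$. The substantive direction is (1)$\Rightarrow$(3), which I would prove by establishing that $\mathcal{F}_S(f)$ is invariant under the subgroup generated by the coordinate-fixing matrices, again working generator by generator. When both indices of $T=E_{i\to i+j}$ lie in $S$, the first case of Proposition~\ref{prop:partial_fourer_under_GLr} combined with $f=f\circ T$ gives $\mathcal{F}_S(f)\circ T^{\intercal}=\mathcal{F}_S(f)$, and when both lie in $[r]\setminus S$ the second case gives $\mathcal{F}_S(f)\circ T=\mathcal{F}_S(f)$; in each case $\mathcal{F}_S(f)$ is invariant. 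One then composes these invariances via $\mathcal{F}_S(f)\circ(T_1T_2)=\bigl(\mathcal{F}_S(f)\circ T_1\bigr)\circ T_2$. I would handle the companion permutation symmetry \eqref{eq:delsarte_lin:C7} in the same pass, using Proposition~\ref{prop:partial_fourer_under_Sr}, since row permutations merely relabel $S$.

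The main obstacle is the one family of generators that Proposition~\ref{prop:partial_fourer_under_GLr} deliberately excludes, namely $T=E_{i\to i+j}$ with $i\notin S$, $j\in S$. For these I would compute $\mathcal{F}_S(f)(TX)$ directly from Definition~\ref{def:partial_fourier_transform}: the row operation shifts the $S$-frequency $\bm{x}_j$ by $\bm{x}_i$ and inserts an extra factor $\chi_{\bm{x}_i}(\bm{y}_j)$ into the defining sum, so that $\mathcal{F}_S(f)(TX)$ is $\mathcal{F}_S(f)$ evaluated at the shifted frequency. I expect the clean way to discharge this is to use the already-established equivalence (1)$\Leftrightarrow$(2): the second condition pins the support of $\mathcal{F}_S(f)$ to the locus where $\langle\bm{x}_i,\bm{x}_j\rangle=0$ for all $i\in S,\,j\notin S$, and on that locus the offending character is trivial, so the shift leaves $\mathcal{F}_S(f)$ unchanged. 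Making this support-and-shift bookkeeping precise—tracking exactly which inner products are forced to vanish before and after the row operation—is the delicate point; note that the mirror case $i\in S,\,j\notin S$ is precisely the modulation identity used in (1)$\Rightarrow$(2), and is what bridges the support statement (2) and the symmetry (3). While carrying this out I would also pin down the exact subgroup intended in the third condition, since the mixed case acts as a modulation rather than a genuine substitution, and I would verify that the composite $T_1T_2$ stays within the range where the two block-type invariances above actually apply.
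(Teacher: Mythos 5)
Your proofs of (1)$\Rightarrow$(2) and (3)$\Rightarrow$(1) coincide with the paper's, your same-side (block) invariances for $\mathcal{F}_S(f)$ match in substance the paper's central computation (the paper derives them from item (2) by inserting $\chi_{\bm{x}_j}(\bm{y})=1$ on the support of $\mathcal{F}_{S\setminus\{i\}}(f)$, you derive them from item (1) via Proposition~\ref{prop:partial_fourer_under_GLr}), and your direct proof of (2)$\Rightarrow$(1) via injectivity of $\mathcal{F}_{\{i\}}$ is a correct alternative the paper does not take. The genuine gap is exactly the step you flag as delicate: passing from invariance under transvections with both indices on the same side of $S$ to invariance under all products $T_1T_2$. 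That step cannot be completed, because item (3) as literally stated is false. Same-side transvections generate only the block-diagonal subgroup, i.e.\ matrices preserving both $\mathrm{span}\{\bm{e}_i\}_{i\in S}$ and $\mathrm{span}\{\bm{e}_j\}_{j\notin S}$, whereas the admissible products $T_1T_2$ include genuinely mixed maps: taking $T_2=I$ and $T_1=I+\bm{e}_a\bm{e}_b^{\intercal}$ with $a\in S$, $b\notin S$, the matrix $T_1$ does fix $\bm{e}_i$ for every $i\in S$, yet its row action adds the row $\bm{x}_b$ (outside $S$) to the row $\bm{x}_a$ (inside $S$).

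A concrete counterexample: take $r=2$, $S=\{1\}$, and $f\equiv 1$ on $\cube{2n}$. Then (1) and (2) hold, and $\mathcal{F}_{\{1\}}(f)(\bm{x}_1,\bm{x}_2)=\1_{[\bm{x}_1=\bm{0}]}$. With $T_1=\left(\begin{smallmatrix}1&1\\0&1\end{smallmatrix}\right)$, which fixes $\bm{e}_1$, and $T_2=I$, one gets $\mathcal{F}_{\{1\}}(f)(T_1T_2X)=\1_{[\bm{x}_1+\bm{x}_2=\bm{0}]}\neq\1_{[\bm{x}_1=\bm{0}]}$, so (3) fails. This also shows your hoped-for support argument cannot succeed: the point $(\bm{0},\bm{x}_2)$ with $\bm{x}_2\neq\bm{0}$ lies in the support of $\mathcal{F}_{\{1\}}(f)$ but its image under the row operation does not; the modulation identity of Proposition~\ref{prop:partial_fourer_under_GLr} controls $\mathcal{F}_S(f\circ T)$, not $\mathcal{F}_S(f)\circ T$, and for mixed $T$ these are different objects. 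You should know that this defect is inherited from the paper itself: its proof establishes precisely the same-side invariances and then asserts that ``the rest follows by composition of such operators,'' which is unjustified for the same reason. Your closing instinct to ``pin down the exact subgroup intended'' is the right one. The statement of item (3) (and of constraint \eqref{eq:delsarte_lin:C6}) must be restricted to $T_1,T_2$ that are additionally block-diagonal --- $T_1$ the identity on the $S$-coordinates and mapping $\mathrm{span}\{\bm{e}_j\}_{j\notin S}$ to itself, and symmetrically for $T_2$. With that restriction your generator-by-generator argument, using that transvections generate $\GL{k}{2}$ over $\mathbb{F}_2$, completes the proof; the specialization $S=\emptyset$ still yields (1); and the same correction repairs the paper's lemma.
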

\begin{proof}
    \begin{itemize}
        \item $\eqref{C456_equivalence:item1} \Rightarrow
        \eqref{C456_equivalence:item2}$: Let $S\subsetneq [r]$, $S\neq
        \emptyset$. Let $\bm{x}_1,\dots,\bm{x}_r \in \cube{n}$ and $i\in S$, $j\in
        [r]\setminus S$ s.t. $\langle \bm{x}_i,\bm{x}_j \rangle = 1 \bmod 2$. Let $T\in
        \GL{r}{2}$ be the mapping $\bm{x}_i\mapsto \bm{x}_i + \bm{x}_j$ and
        $\bm{x}_k\mapsto \bm{x}_k$ for $k\neq i$.
        By assumption and by proposition \ref{prop:partial_fourer_under_GLr},
        \[
            \mathcal{F}_{S}(f) = \mathcal{F}_{S}(f\circ T)
            = \chi_{\bm{x}_i}(\bm{x}_j) \mathcal{F}_{S}(f)
            = -\mathcal{F}_{S}(f)
        \]
        Hence $\mathcal{F}_{S}(f)=0$.
        \item $\eqref{C456_equivalence:item2} \Rightarrow
        \eqref{C456_equivalence:item3}$: It is enough to show that
        $\mathcal{F}_S(f)$ is invariant under the mapping $\bm{x}_i \mapsto
        \bm{x}_i+\bm{x}_j$, where $i\neq j$ and $i,j$ are either both in $S$ or both in
        $[r]\setminus S$. The rest follows by composition of such operators.
        
        If $|S|\leq 1$ the claim holds trivially. Otherwise, let $i\neq j$, $i,j\in S$,
        and let $\bm{x}_1,\dots,\bm{x}_r\in \cube{n}$. Observe
        that $\mathcal{F}_{S}(f) = \mathcal{F}_{\{i\}}\mathcal{F}_{S\setminus
        \{i\}}(f)$. Hence
        \begin{dmath*}
            \mathcal{F}_{S}(f)(\bm{x}_1,\dots,\bm{x}_r)
            = 2^{-n}\sum_{\bm{y}\in\cube{r}} \chi_{\bm{x}_i}(\bm{y}) 
                \mathcal{F}_{S\setminus \{i\}}(f)
                    (\dots,\bm{x}_{i-1},\bm{y},\bm{x}_{i+1},\dots)
        \end{dmath*}
        by assumption, $\mathcal{F}_{S\setminus
        \{i\}}(f)(\dots,\bm{x}_{i-1},\bm{y},\bm{x}_{i+1},\dots) = 0$ if
        $\langle \bm{y},\bm{x}_j\rangle = 1$, hence $\chi_{\bm{x}_j}(\bm{y})=1$ 
        for every non-zero element of the sum. So
        \begin{dmath*}
            \mathcal{F}_{S}(f)(\bm{x}_1,\dots,\bm{x}_r)
            = 2^{-n}\sum_{\bm{y}\in\cube{r}}
                \chi_{\bm{x}_i}(\bm{y}) \chi_{\bm{x}_j}(\bm{y})
                \times
                \\
                \times
                \mathcal{F}_{S\setminus \{i\}}(f)
                    (\dots,\bm{x}_{i-1},\bm{y},\bm{x}_{i+1},\dots)
            = \mathcal{F}_{S}(f)(\dots,\bm{x}_{i-1},\bm{x}_i+\bm{x}_j,\bm{x}_{i+1},\dots)
        \end{dmath*}
        To see that the same applies if $i,j\in[r]\setminus S$, observe that
        $\mathcal{F}_{S}(f) = 2^n\mathcal{F}_{\{i\}}\mathcal{F}_{S\cup
        \{i\}}(f)$ and repeat the same steps.
        
        \item $\eqref{C456_equivalence:item3} \Rightarrow
        \eqref{C456_equivalence:item1}$: Take $S = \emptyset$.
    \end{itemize}
\end{proof}

\section{On Multivariate Krawtchouk Polynomials}

\label{section:Krawtchouk_properties}

The multivariate Krawtchouk polynomials are orthogonal polynomials on the
multinomial distribution. Univariate Krawtchouk polynomials
are the Fourier transform of the level sets in the Boolean cube, and as we show
in this section, these polynomials are the Fourier transform of the level-set indicators $\{L_\alpha\}$.

We borrow the terminology of \cite{diaconis2014introduction}.
The multinomial distribution $m(\bm{\alpha},\bm{p})$
arises in the stochastic process where $n$ identical balls
are independently dropped into $d$ bins, where the probability
of falling into the $i$-th bin is $p_i$.
The probability that $\alpha_i$ balls end up in bin $i$ is 
\[
    m(\bm{\alpha},\bm{p})
    = \binom{n}{\alpha_0,\dots,\alpha_{d-1}}\prod_{i=0}^{d-1}p_i^{\alpha_i}
    = \binom{n}{\bm{\alpha}}\bm{p}^{\bm{\alpha}}
\]
Here $\bm{p}=(p_0,\dots,p_{d-1})$, all $\alpha_i$ are nonnegative integers and their sum is $n$.
We use the shorthand $m(\bm{\alpha})$ when $\bm{p}$ is uniform.

Orthogonal systems of univariate polynomials are constructed
by applying a Gram-Schmidt process to
the polynomials $1, x, x^2,\ldots$ e.g., \cite{szeg1939orthogonal}.
The result depends only on a measure that we fix on the underlying set.
However, as mentioned e.g., in \cite{dunkl2014orthogonal},
in the process of defining an orthogonal multivariate 
family of polynomials, there is another choice to make, and this choice affects
the resulting family. Namely, we need to choose the
order in which we go over the monomials of a given degree.
In \cite{diaconis2014introduction}, this freedom is mitigated by
choosing a basis of orthogonal
functions on $\{0,1,\dots,d-1\}$. Every such basis leads to 
a unique set of orthogonal polynomials, as follows. Let
$\bm{h}=\{h^{l}\}_{l=0}^{d-1}$ be a complete set of orthogonal functions 
w.r.t.\ $\bm{p}$, with $h^0\equiv 1$. Namely,
\[
    \sum_{i=0}^{d-1}h^{l}(i)h^{k}(i)p_i = \delta_{lk} a_k,
    \quad
    0\leq k,l\leq d
\]
The Krawtchouks are defined in terms of a generating function.
Fix $\bm{\alpha}$ and $\bm{h}$ as above. For every choice of nonnegative reals
$\xi_0,\ldots,\xi_{d-1}$ whose sum is $n$, we define
\[
    Q(\bm{\xi})=Q_{ \bm{\alpha}}(\bm{\xi},\bm{h})
    = \underset{\prod_{i=1}^{d-1}w_i^{\alpha_i}}{\text{coef}}
    \prod_{j=0}^{d-1}\bigg\{
        1 + \sum_{l=1}^{d-1}w_l h^{l}(j)
    \bigg\}^{\xi_j}
\]
where $\bm{w}=(w_0\dots,w_{d-1})$ are formal variables.
The total degree of
$Q_{ \bm{\alpha}}$ is $\sum_{i=1}^{d-1}\alpha_i$. Note that
$\alpha_0,w_0$ do not appear in the definition. An equivalent definition that
does include $\alpha_0,w_0$ is:
\begin{equation}
    \label{eq:Krawtchouk_def}
    Q_{ \bm{\alpha}}(\bm{\xi},\bm{h})
    = \underset{\bm{w}^{\bm{\alpha}}}{\text{coef}}
    \prod_{j=0}^{d-1}\bigg\{
        \sum_{l=0}^{d-1}w_l h^{l}(j)
    \bigg\}^{\xi_j}
\end{equation}
It is easy to see the equivalence by expanding each factor with the multinomial
expansion. We will be using all of this with $d=2^r$, uniform $\bm{p}\equiv 2^{-r}$
and with
the orthonormal functions that are the characters of $\cube{r}$:
$\bm{h} = \{\chi_{\bm{u}}\}_{{\bm{u}}\in\cube{r}}$.

Recall the definition of level-set indicators, $\{L_{\bm{\alpha}}\}$:
\[
    L_{\bm{\alpha}}(\x) = \1_{[\cf_\x = \bm{\alpha}]},\quad
    \x\in \cube{r\times n}
\]
We also defined $\mathcal{I}_{r,n}$ the set of all ordered partitions of $[n]$ into $2^r$ parts.
\[
    \mathcal{I}_{r,n} = \{\cf_\x: \x\in\cube{r\times n}\}
\]
Also, $\mathcal{I}_{r,n}$ is the support of the multinomial distribution
with $n$ balls, $2^r$ bins, where $\bm{p}$ is uniform.

Let $\x\in \cube{r\times n}$ be a random matrix that results by sampling $n$ columns
independently and uniformly
from $\cube{r}$. The probability that $L_{\bm{\alpha}}(\x)=1$ is 
$m(\bm{\alpha}) = 2^{-rn}\binom{n}{\alpha}$. It is clear that
$L_{\bm{\alpha}}(\x)$ depends only on $\cf_\x$, and by proposition
\ref{prop:partial_fourer_under_Sn} this is true for $\hat{L}_{\bm{\alpha}}(\x)$
as well.
Define
\[
    K_{\bm{\alpha}}(\cf_\x) = 2^{rn} \hat{L}_{\bm{\alpha}}(\x),\quad
    \x\in\cube{r\times n}
\]
It is easy to see that $\{K_{\bm \alpha}\}$ are orthogonal with respect to $m(\bm{\alpha})$, using Parseval's identity:
\begin{dmath*}
    \sum_{\bm{\gamma}} m(\bm{\gamma}) K_{\bm{\alpha}}(\bm{\gamma})
        K_{\bm{\beta}}(\bm{\gamma}) 
    = \sum_{\bm{\gamma}} 2^{-rn}\sum_{\x:\cf_X=\bm{\gamma}}2^{2rn}
        \hat{L}_{\bm{\alpha}}(\x) \hat{L}_{\bm{\beta}}(\x)
    = \sum_{\x\in\cube{r\times n}}
        L_{\bm{\alpha}}(\x) L_{\bm{\beta}}(\x)
    =  \binom{n}{\bm\alpha}\delta_{\bm{\alpha},\bm{\beta}}
\end{dmath*}
where $\bm{\alpha},\bm{\beta},\bm{\gamma}\in\mathcal{I}_{r,n}$.
The extra $2^{-rn}$ is there because inner product is normalized in the
non-Fourier space.

The following proposition shows that $\{K_{\bm\alpha}\}$ are Krawtchouk
polynomials.

\begin{proposition}
    $K_{\bm \alpha}$ is the Krawtchouk polynomial $Q_{\bm \alpha}(\cdot,\bm{h})$ with
    $d=2^r$, $\bm{h}$ are the Fourier characters $\{\chi_{\bm{u}}\}_{{\bm{u}}\in\cube{r}}$,
    and $\bm{p}\equiv 2^{-r}$ is the uniform distribution.
    \label{prop:level_set_fourier_Krawtchouk}
\end{proposition}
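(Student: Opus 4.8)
The goal is to identify the Fourier-defined family $K_{\bm\alpha}(\cf_\x) = 2^{rn}\hat{L}_{\bm\alpha}(\x)$ with the combinatorially-defined Krawtchouk polynomial $Q_{\bm\alpha}(\cdot,\bm{h})$ from the generating function \eqref{eq:Krawtchouk_def}, specialized to $d=2^r$, uniform $\bm{p}\equiv 2^{-r}$, and $\bm{h}=\{\chi_{\bm{u}}\}_{{\bm{u}}\in\cube{r}}$. The plan is to compute $\hat{L}_{\bm\alpha}$ directly from its definition and match the result term-by-term against the coefficient-extraction formula for $Q_{\bm\alpha}$. Both objects are functions of $\cf_\x=\bm\gamma$ alone, so it suffices to verify the equality at an arbitrary $\x$ with a given column enumerator.

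First I would write out the Fourier transform explicitly. Since $L_{\bm\alpha}(\y)=1$ exactly when $\cf_\y=\bm\alpha$, we have
\[
    \hat{L}_{\bm\alpha}(\x)
    = 2^{-rn}\sum_{\y:\cf_\y=\bm\alpha}\chi_{\x}(\y)
    = 2^{-rn}\sum_{\y:\cf_\y=\bm\alpha}(-1)^{\langle \x,\y\rangle},
\]
where the inner product is the entrywise one on $\cube{r\times n}$. The key move is to factor this character over the $n$ columns: if $\bm\xi_1,\dots,\bm\xi_n\in\cube{r}$ are the columns of $\x$ and $\bm\eta_1,\dots,\bm\eta_n$ those of $\y$, then $(-1)^{\langle\x,\y\rangle}=\prod_{j=1}^n(-1)^{\langle\bm\xi_j,\bm\eta_j\rangle}=\prod_{j=1}^n\chi_{\bm\eta_j}(\bm\xi_j)$. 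The constraint $\cf_\y=\bm\alpha$ says precisely that among the column-labels $\bm\eta_j\in\cube{r}$, each value $\bm{u}\in\cube{r}$ occurs exactly $\alpha_{\bm{u}}$ times. This is a multinomial sum, and the standard generating-function encoding of such a sum is to introduce a formal variable $w_{\bm{u}}$ for each $\bm{u}\in\cube{r}$ and extract the coefficient of $\bm{w}^{\bm\alpha}=\prod_{\bm{u}}w_{\bm{u}}^{\alpha_{\bm{u}}}$.

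Carrying this out, I would rewrite the constrained sum as
\[
    2^{rn}\hat{L}_{\bm\alpha}(\x)
    = \underset{\bm{w}^{\bm\alpha}}{\operatorname{coef}}
    \prod_{j=1}^{n}\Bigl(\sum_{\bm{u}\in\cube{r}} w_{\bm{u}}\,\chi_{\bm{u}}(\bm\xi_j)\Bigr),
\]
since expanding the product and collecting monomials $\bm{w}^{\bm\alpha}$ reproduces exactly the sum over column-labelings $\y$ with $\cf_\y=\bm\alpha$, each weighted by $\prod_j\chi_{\bm\eta_j}(\bm\xi_j)$ (using $\chi_{\bm{u}}(\bm\xi)=\chi_{\bm\xi}(\bm{u})$). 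Now I group the $n$ factors by the value of the column $\bm\xi_j$: a column equal to $\bm{v}\in\cube{r}$ contributes the factor $\sum_{\bm{u}}w_{\bm{u}}\chi_{\bm{u}}(\bm{v})$, and there are $\cf_\x(\bm{v})=\gamma_{\bm{v}}$ such columns. Hence the product becomes $\prod_{\bm{v}\in\cube{r}}\bigl(\sum_{\bm{u}}w_{\bm{u}}\chi_{\bm{u}}(\bm{v})\bigr)^{\gamma_{\bm{v}}}$. Comparing with \eqref{eq:Krawtchouk_def}, where the $j$-th factor is $\bigl(\sum_{l}w_l h^{l}(j)\bigr)^{\xi_j}$ with $\xi_j$ the $j$-th coordinate of the evaluation point, I see these coincide under the dictionary: bins indexed by $\bm{v}\in\cube{r}$ (so $d=2^r$), $h^{l}=\chi_{\bm{u}}$ as the orthogonal basis, and $\bm\xi=\cf_\x$ as the evaluation argument. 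This exhibits $K_{\bm\alpha}(\cf_\x)=Q_{\bm\alpha}(\cf_\x,\bm{h})$.

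The steps are almost entirely bookkeeping, so I do not expect a deep obstacle; the one point requiring care is the translation between the two indexing conventions. The definition \eqref{eq:Krawtchouk_def} labels bins by $j\in\{0,\dots,d-1\}$ and writes $h^{l}(j)$, whereas here the natural bin-index is a vector $\bm{v}\in\cube{r}$ and the basis function is $\chi_{\bm{u}}(\bm{v})$; I must fix the identification $j\leftrightarrow\bm{v}$ and $l\leftrightarrow\bm{u}$ consistently and check that the orthogonality hypothesis $\sum_{\bm{v}}\chi_{\bm{u}}(\bm{v})\chi_{\bm{u}'}(\bm{v})\,2^{-r}=\delta_{\bm{u},\bm{u}'}$ holds for the characters of $\cube{r}$ (it does, by orthogonality of characters), with $h^0=\chi_{\bm 0}\equiv 1$ as required. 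Once this dictionary is pinned down, the factorization of the character over columns and the regrouping by column value deliver the generating-function identity verbatim, completing the proof.
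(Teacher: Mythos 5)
Your proof is correct, and it reaches the same generating-function identity as the paper, but the bookkeeping in the middle is organized differently. The paper's proof groups the sum over $\y$ by the joint column enumerator $A=\cf_{[X,Y]}$ of the stacked $2r\times n$ matrix, explicitly counts each fiber as $\prod_{\bm{u}}\binom{\beta_{\bm{u}}}{A_{\bm{u}}}$, introduces the formal variables $w_{\bm{v}}$ to select the matrices with $\1^\intercal A=\bm{\alpha}$, and then reassembles the sum over $A$ into a product via the multinomial theorem. You bypass the matrix $A$ entirely: by factoring the character over the $n$ columns and encoding the constraint $\cf_\y=\bm{\alpha}$ as extraction of the coefficient of $\bm{w}^{\bm{\alpha}}$ from $\prod_{j=1}^{n}\bigl(\sum_{\bm{u}}w_{\bm{u}}\chi_{\bm{u}}(\bm{\xi}_j)\bigr)$, you let the generating function do the multinomial counting automatically, and the product form then follows by simply grouping the $n$ factors according to the value of $\bm{\xi}_j$. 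What your route buys is economy --- no fiber counting, no multinomial theorem, no nested sums over rows of $A$ --- at the cost of having to justify the coefficient-extraction identity itself, which you do correctly (expanding the product over $j$ enumerates exactly the column labelings $\y$ with the prescribed enumerator). Your handling of the normalization $K_{\bm{\alpha}}=2^{rn}\hat{L}_{\bm{\alpha}}$ and of the dictionary $j\leftrightarrow\bm{v}$, $l\leftrightarrow\bm{u}$, $h^{l}(j)\leftrightarrow\chi_{\bm{u}}(\bm{v})$, including the orthogonality check and $h^0\equiv 1$, is also clean; indeed it is tidier than the paper's opening line, which momentarily drops the $2^{rn}$ factor.
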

\begin{proof}
    Let $\x\in\cube{r\times n}$ and $\cf_\x=\bm{\beta}=(\beta_{\bm{u}})_{{\bm{u}}\in\cube{r}}$.
    We show that $K_{\bm \alpha}(\bm{\beta})$ coincides
    with the definition of $Q_{\bm \alpha}(\bm{\beta},\{\chi_{\bm{u}}\}_{{\bm{u}}\in\cube{r}})$ in \eqref{eq:Krawtchouk_def}.
    
    By definition,
    \begin{align*}
        K_{\bm \alpha}(\bm{\beta}) = \hat{L}_{\bm \alpha}(\x)
        &= 2^{-rn} \sum_{\y\in\cube{r\times n}} (-1)^{\langle X,Y \rangle} L_{\bm\alpha}(\y)
    \end{align*}
    The inner product between $\x$ and $\y$ can be expressed column-wise,
    \[
        \langle X,Y \rangle
        = \sum_{j=1}^{n}\langle (\x^\intercal)_j,(\y^\intercal)_j \rangle
        = \sum_{{\bm{u}},{\bm{v}}\in\cube{r}} \langle{\bm{u}},{\bm{v}} \rangle \cf_{[X,Y]}({\bm{u}},{\bm{v}})
    \]
    where $[X,Y]\in\cube{2r\times n}$ is the stacking of $\x$ on top of $\y$,
    and $\cf_{[X,Y]}({\bm{u}},{\bm{v}})$ is the number of times the column $[{\bm{u}},{\bm{v}}]\in\cube{2r}$ 
    appears in the matrix $[X,Y]$. We consider 
    $\cf_{[X,Y]}({\bm{u}},{\bm{v}})$ as a matrix indexed by $\cube{r}\times \cube{r}$.
    Its ${\bm{u}}$-th row sums to $\beta_{\bm{u}}$ and its ${\bm{v}}$-th column sums
    to $\cf_Y({\bm{v}})$. Hence, 
    \begin{gather*}
        \hat{L}_{\bm \alpha}(\x)=
        2^{-rn} \sum_{A} \sum_{\substack{\y\in\cube{r\times n}: \\ \cf_{[X,Y]}=A} }
            \prod_{{\bm{u}},{\bm{v}}\in\cube{r}} (-1)^{\langle {\bm{u}},{\bm{v}} \rangle A_{{\bm{u}},{\bm{v}}}} L_{\bm{\alpha}}(\y)
    \end{gather*}
    where the outer sum is over all matrices $A\in \N^{2^r\times 2^r}$ with
    $A\cdot \1=\bm{\beta}$.
    
    If $\cf_\y=\bm{\alpha}$ then $\1^{\intercal}\cdot A = \bm{\alpha}$.
    In particular, $A$ uniquely determines $L_{\bm\alpha}(\y)$,
    so
    the product does not depend on $\y$. The sum over $\y$ evaluates to the size
    of the set $\{\y\in\cube{r\times n} : \cf_{[X,Y]}=A\}$, which we now compute.
    For every ${\bm{u}}\in\cube{r}$, $[X,Y]$ contains
    $\cf_{X}({\bm{u}})=\beta_{\bm{u}}$ columns whose prefix is ${\bm{u}}$. For every ${\bm{v}}\in\cube{r}$, the column
    $[{\bm{u}},{\bm{v}}]$ appears $A_{{\bm{u}},{\bm{v}}}$ times in $[X,Y]$. Since the position of the ${\bm{u}}$'s
    is fixed, it is left to position the ${\bm{v}}$'s with respect to each $\bm{u}$. Hence
    \[
        |\{\y\in\cube{r\times n} : \cf_{[X,Y]}=A\}|
        = \prod_{{\bm{u}}\in\cube{r}} \binom{\beta_{\bm{u}}}{A_{\bm{u}}}
    \]
    where $A_{\bm{u}}$ is the row of $A$ that is indexed by $\bm{u}$.
    
    Let $\bm{w}=(w_{\bm{v}})_{{\bm{v}}\in\cube{r}}$ be formal variables.
    If $A$ is such that $\1^\intercal A = \bm{\alpha}$ then
    \[
        \prod_{{\bm{u}},{\bm{v}}}w_{\bm{v}}^{A_{{\bm{u}},{\bm{v}}}}
        = \prod_{\bm{v}}w_{\bm{v}}^{\sum_{\bm{u}} A_{{\bm{u}},{\bm{v}}}}
        = \prod_{\bm{v}}w_{\bm{v}}^{\alpha_{\bm{v}}}
    \]
    hence $\hat{L}_{\bm \alpha}(X)$ equals
    \[
        \underset{\bm{w}^{\bm{\alpha}}}{\text{coef}}~
            2^{-rn} \sum_{A} 
            \prod_{{\bm{u}}\in\cube{r}} \binom{\beta_{\bm{u}}}{A_{\bm{u}}} \prod_{{\bm{v}}\in\cube{r}}
            \Big(
                (-1)^{\langle {\bm{u}},{\bm{v}} \rangle} w_{\bm{v}}
            \Big)^{A_{{\bm{u}},{\bm{v}}}} 
    \]
    The sum over $A$ can be expanded to nested sums over its rows,
    \[
        \sum_{A} = \sum_{A_0}\sum_{A_1}\cdots\sum_{A_{2^{r}-1}}
    \]
    where $A_{\bm{u}}\in\N^{2^r}$, $A_{\bm{u}}\cdot\1 = \beta_{\bm{u}}$. Every factor in the
    product depends on a single row of $A$,
    so the product and the sum can be transposed.
    Then, by the multinomial theorem:
    \begin{dmath*}
        \hat{L}_{\bm \alpha}(X)
        = \underset{\bm{w}^{\bm{\alpha}}}{\text{coef}}~
            2^{-rn}
            \prod_{{\bm{u}}\in\cube{r}}
            \bigg\{
                \sum_{{\bm{v}}\in\cube{r}} (-1)^{\langle {\bm{u}},{\bm{v}} \rangle} w_v
            \bigg\}^{\beta_{\bm{u}}}
        = 2^{-rn} Q_{\bm{\alpha}}(\bm{\beta}, \{\chi_{\bm{v}}\}_{\bm{v}\in\cube{r}})
    \end{dmath*}
\end{proof}

We turn to deal with the partial Fourier transform of the level-set indicators.
Let $S\subsetneq\{1,\dots,r\}$ be non-empty. 
Denote by $X',X''$ be the sub-matrix of $X\in \cube{r\times n}$ with
row set $S$ and $[r]\setminus S$, respectively.
Similarly, $\bm{u}',\bm{u}''$ are obtained from $\bm{u}\in\cube{r}$ by restricting to
$S,[r]\setminus S$, respectively.
For $\bm{\alpha} \in \mathcal{I}_{r,n}$ define the rearrangement of 
$\bm{\alpha}$ into a matrix
$\bm{\alpha}^{S} = (\alpha^{S}_{\bm{u}'',\bm{u}'})_{\bm{u}''\in\cube{r-|S|},\bm{u}'\in\cube{|S|}}$ by
\[
    \alpha^{S}_{\bm{u}'',\bm{u}'} = \alpha_{\bm{u}}\quad
    \bm{u}\in\cube{r}
\]
For $ X\in\cube{r\times n}$ we define $K^S_{\bm{\alpha}}$ as follows:
\[
    K^{S}_{\bm{\alpha}}(\cf_X)
    \coloneqq 2^{|S|n}\mathcal{F}_{S}(L_{\bm{\alpha}})(X)
\]
The next proposition says that $K^{S}_{\bm{\alpha}}$ is a sparse 
product of lower-order Krawtchouks.

\begin{proposition}
For every $\bm{\beta}\in\mathcal{I}_{r,n}$ there holds
    \[
        K^{S}_{\bm{\alpha}}(\bm{\beta})
        = 
        \begin{cases}
        \prod K_{\bm{\alpha}^{S}_{\bm{v}}}(\bm{\beta}^{S}_{\bm{v}}) &
        \text{if } \bm{\alpha}^{S}_{\bm{v}}\cdot\1 = \bm{\beta}^{S}_{\bm{v}}\cdot\1~\forall \bm{v} \\
        0 & \text{otherwise.}
        \end{cases}
    \]
    where the product is over all $\bm{v}\in\cube{r-|S|}$.
    Here $\bm{\alpha}^{S}_v$ is the row of $\bm{\alpha}^{S}$ at index $\bm{v}$, and
    $\bm{\alpha}^{S}_{\bm{v}}\cdot\1 = \sum_{\bm{u}'\in\cube{|S|}}\alpha^{S}_{\bm{v},\bm{u}'}$.
    \label{prop:partial_Krawtchouk}
\end{proposition}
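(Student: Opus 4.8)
The plan is to unfold both sides from their definitions and match them term by term, the whole argument being a careful simultaneous partition of the rows and the columns of the matrices involved. Fix any $X\in\cube{r\times n}$ with $\cf_X=\bm{\beta}$. By the definition of $K^S_{\bm{\alpha}}$ and of the partial Fourier transform (Definition \ref{def:partial_fourier_transform}), the normalizing factor $2^{|S|n}$ cancels and
\[
    K^S_{\bm{\alpha}}(\bm{\beta})
    = \sum_{Y\in\cube{r\times n}} L_{\bm{\alpha}}(Y)
        \prod_{i\in S}\chi_{\bm{x}_i}(\bm{y}_i)
        \prod_{i\in[r]\setminus S}\delta_{\bm{x}_i}(\bm{y}_i),
\]
where $\bm{x}_i,\bm{y}_i$ are the rows of $X,Y$. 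The delta factors force the rows of $Y$ outside $S$ to coincide with those of $X$; writing $X',X''$ (resp. $Y',Y''$) for the row-$S$ and row-$([r]\setminus S)$ submatrices, this means $Y''=X''$, so the sum runs only over $Y'$ such that the full matrix $(Y';X'')$ has column enumerator $\bm{\alpha}$.

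Next I would pass to columns. Since $\sum_{i\in S}\langle \bm{x}_i,\bm{y}_i\rangle=\sum_{j=1}^n\langle \bm{x}'_j,\bm{y}'_j\rangle$, where $\bm{x}'_j,\bm{y}'_j\in\cube{|S|}$ are the $S$-parts of the $j$-th columns of $X,Y$, the character product rewrites as $\prod_{j=1}^n\chi_{\bm{x}'_j}(\bm{y}'_j)$ with characters of $\cube{|S|}$. I then partition the columns $\{1,\dots,n\}$ into blocks $J_{\bm{v}}$, $\bm{v}\in\cube{r-|S|}$, according to the $([r]\setminus S)$-part of column $j$ of $X$ (equivalently of $Y$, as $Y''=X''$); by definition of $\bm{\beta}^S$ the block $J_{\bm{v}}$ has exactly $\bm{\beta}^S_{\bm{v}}\cdot\1$ columns. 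The constraint $\cf_{(Y';X'')}=\bm{\alpha}$ then reads, block by block: among the columns of $J_{\bm{v}}$, exactly $\alpha^S_{\bm{v},\bm{u}'}$ must carry $S$-part $\bm{u}'$ for each $\bm{u}'\in\cube{|S|}$.

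This is where the case split appears. The per-block requirement is satisfiable only if the prescribed column counts in block $\bm{v}$ sum to its actual size, i.e. $\bm{\alpha}^S_{\bm{v}}\cdot\1=\bm{\beta}^S_{\bm{v}}\cdot\1$; if this fails for some $\bm{v}$ there is no admissible $Y'$ and the (empty) sum is $0$, yielding the second case. When it holds for all $\bm{v}$, both the constraint and the summand factor across the blocks, so
\[
    K^S_{\bm{\alpha}}(\bm{\beta})
    = \prod_{\bm{v}\in\cube{r-|S|}}
        \Bigg(\sum_{Y'_{\bm{v}}}\ \prod_{j\in J_{\bm{v}}}\chi_{\bm{x}'_j}(\bm{y}'_j)\Bigg),
\]
the inner sum ranging over all $|S|\times|J_{\bm{v}}|$ matrices $Y'_{\bm{v}}$ with column enumerator $\bm{\alpha}^S_{\bm{v}}$. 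Finally, recalling from Proposition \ref{prop:level_set_fourier_Krawtchouk} and its defining formula that $K_{\bm{\gamma}}(\cf_Z)=\sum_{W:\,\cf_W=\bm{\gamma}}(-1)^{\langle Z,W\rangle}$, each inner sum is precisely $K_{\bm{\alpha}^S_{\bm{v}}}(\bm{\beta}^S_{\bm{v}})$ (the Krawtchouk for $|S|$ rows and $|J_{\bm{v}}|$ columns), giving the claimed product.

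I expect no conceptual difficulty here; the only real hazard is notational. The hard part will be keeping the two partitions straight --- rows split into $S$ versus $[r]\setminus S$, columns grouped by their $([r]\setminus S)$-value $\bm{v}$ --- and verifying that the reindexing $\bm{\alpha}\mapsto\bm{\alpha}^S$ is exactly what aligns the block-wise column constraints with the block-wise Krawtchouk sums, so that the factorization is clean and the normalizations cancel as claimed.
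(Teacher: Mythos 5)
Your proposal is correct and follows essentially the same route as the paper's proof: unfold the definition of $\mathcal{F}_S(L_{\bm{\alpha}})$ at a fixed $X$ with $\cf_X=\bm{\beta}$, use the delta factors to force $Y''=X''$, partition the columns according to the value of the $([r]\setminus S)$-part, factor the constrained sum across blocks, and identify each block sum as a lower-order Krawtchouk. The only cosmetic differences are that you cancel the $2^{\pm|S|n}$ normalizations at the outset and treat the incompatible case as an empty sum, whereas the paper carries the normalization through and encodes that case via indicators $\1_{[\bm{\alpha}^{S}_{\bm{u}''}\cdot\1=\bm{\beta}^{S}_{\bm{u}''}\cdot\1]}$ inside the product.
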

\begin{proof}

    Let $X\in\cube{r\times n}$ such that $\cf_X=\bm{\beta}$.
    By definition,
    \begin{dmath*}
        K^{S}_{\bm{\alpha}}(\bm{\beta})
        = \mathcal{F}_S(L_{\bm{\alpha}})(X)
        = 2^{-|S|n} \sum_{\y\in\cube{r\times n}}
            (-1)^{\langle X',\y' \rangle} \delta_{X''}(\y'')
            L_{\bm{\alpha}}(\y)
    \end{dmath*}
    We express $\delta_{X''}(\y'')L_{\bm{\alpha}}(\y)$ in terms of
    $\y'$, $\bm{\alpha}^{S}$ and $\bm{\beta}^{S}$.
    
    Let $\y\in\cube{r\times n}$ such that $\cf_\y=\bm{\alpha}$ and $\y'' = X''$.
    The number of times $\bm{u}''\in\cube{r-|S|}$ occurs in $\y''$ is $\bm{\alpha}^{S}_{\bm{u}''}\cdot\1= \sum_{\bm{u}'}\alpha^{S}_{\bm{u}'',\bm{u}'}$. But this is equal $\bm{\beta}^{S}_{\bm{u}''}\cdot\1$ because $\y''=X''$.
    
    Let $\y'|_{\y''=\bm{u}''}$ be the subset of columns from $\y'$
    for which the corresponding column in $\y''$ is $\bm{u}''$.
    Then $\cf_{\y'|_{\y''=\bm{u}''}} = \bm{\alpha}^{S}_{\bm{u}''}$.
    \begin{dmath*}
        L_{\bm{\alpha}}(\y)\delta_{X''}(\y'')=
        \prod_{\bm{u}''\in\cube{r-|S|}}
            \1_{[\bm{\alpha}^{S}_{\bm{u}''}\cdot\1=\bm{\beta}^{S}_{\bm{u}''}\cdot\1]}
            L_{\bm{\alpha}^{S}_{\bm{u}''}}(\y'|_{\y''=\bm{u}''})
    \end{dmath*}
    The sum over $\y$ can be broken into nested sums over
    $\{\y'|_{\y''=\bm{u}''}\}_{\bm{u}''\in\cube{r-|S|}}$,
    \[
        \sum_{\y\in\cube{r\times n}}
        = \sum_{\y'_0} \sum_{\y'_1} \dots \sum_{\y'_{2^{r-|S|}-1}}
    \]
    where $\y'_{\bm{u}''}\in\cube{|S|\times n}$ are mutually independent.
    Each factor in the product depends on a single $\y'_{\bm{u}''}$ so the order of
    summations and products can be reversed,
    \begin{multline*}
        \mathcal{F}_S(L_{\bm{\alpha}})(X)
        = 2^{-|S|n} \prod_{\bm{u}''\in\cube{r-|S|}} \bigg[
            \1_{[\bm{\alpha}^{S}_{\bm{u}''}\cdot\1=\bm{\beta}^{S}_{\bm{u}''}\cdot\1]}
            \times 
            \\
            \times
            \sum_{\y'_{\bm{u}''}\in\cube{|S|\times n}}
            (-1)^{\langle X'|_{X''=\bm{u}''},\y'_{\bm{u}''} \rangle} 
            L_{\bm{\alpha}^{S}_{\bm{u}''}}(\y'|_{\y''=\bm{u}''}) \bigg]
    \end{multline*}
    Observe that the inner sum is simply
    \[
    \hat{L}_{\bm{\alpha}^{S}_{\bm{u}''}}(X'|_{X''=\bm{u}''}) 
    = K^{S}_{\bm{\alpha}^{S}_{\bm{u}''}}(\bm{\beta}^{S}_{\bm{u}''}).
    \]
\end{proof}

In the last part of this section we consider
$K_{\bm{\alpha}}^{S}$ under the action of the general linear group $\GL{r}{2}$.

Fix $S\subset [r]$ and $\bm{\alpha}\in\mathcal{I}_{r,n}$.
It is easy to verify that 
\[
    K^{S}_{T\cdot \bm{\alpha}}(\cf_{X}) = 
        \mathcal{F}_{S}(L_{\bm{\alpha}}\circ T)(X)
\]
for every $X\in\cube{r\times n}$ and $T\in \GL{r}{2}$.
Propositions \ref{prop:Krawtchouk_under_Sr} and
\ref{prop:Krawtchouk_under_GLr} below are immediate consequences
of \ref{prop:partial_fourer_under_Sr} and
\ref{prop:partial_fourer_under_GLr}.

\begin{proposition}
    Let $T\in \GL{r}{2}$ be a permutation matrix, $T\bm{e}_i=\bm{e}_{\pi(i)}$ for
    some $\pi\in \mathfrak{S}_r$. Then
    \[
        K^{S}_{T \cdot \bm{\alpha}}
        = K^{\pi^{-1}(S)}_{\bm{\alpha}} \circ T
    \]
    \label{prop:Krawtchouk_under_Sr}
\end{proposition}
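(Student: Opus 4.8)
The plan is to read the claim as an immediate consequence of two facts already in hand: the identity
$K^{S}_{T\cdot\bm{\alpha}}(\cf_X)=\mathcal{F}_S(L_{\bm{\alpha}}\circ T)(X)$ recorded just before the proposition, and Proposition~\ref{prop:partial_fourer_under_Sr}, which governs how the partial Fourier transform interacts with the row-permutation action of $\mathfrak{S}_r$. The only conceptual ingredient is to recognize that a permutation matrix $T\in\GL{r}{2}$ acts on $\cube{r\times n}$ by left multiplication exactly as a row permutation, so that Proposition~\ref{prop:partial_fourer_under_Sr} applies verbatim.

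First I would verify that when $T\in\GL{r}{2}$ is the permutation matrix with $T\bm{e}_i=\bm{e}_{\pi(i)}$, the map $X\mapsto TX$ permutes the rows of $X$ according to $\pi$. Hence $T$ realizes the same action on $\cube{r\times n}$ as the row permutation $\pi\in\mathfrak{S}_r$ appearing in Proposition~\ref{prop:partial_fourer_under_Sr}, and in the notation of that proposition the function $L_{\bm{\alpha}}\circ T$ is precisely $f\circ\pi$ with $f=L_{\bm{\alpha}}$.

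Next, I would chain the two facts. Starting from the recorded identity and applying Proposition~\ref{prop:partial_fourer_under_Sr} with $f=L_{\bm{\alpha}}$,
\[
    K^{S}_{T\cdot\bm{\alpha}}(\cf_X)
    = \mathcal{F}_S(L_{\bm{\alpha}}\circ T)(X)
    = \big(\mathcal{F}_{\pi^{-1}(S)}(L_{\bm{\alpha}})\circ T\big)(X)
    = \mathcal{F}_{\pi^{-1}(S)}(L_{\bm{\alpha}})(TX).
\]
Since $\cf_{TX}=T\cdot\cf_X$ (the induced $\GL{r}{2}$-action on the index set), the right-hand side is, up to the common normalization $2^{|S|n}=2^{|\pi^{-1}(S)|n}$, exactly $K^{\pi^{-1}(S)}_{\bm{\alpha}}(\cf_{TX})=K^{\pi^{-1}(S)}_{\bm{\alpha}}(T\cdot\cf_X)$. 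Thus $K^{S}_{T\cdot\bm{\alpha}}(\cf_X)=K^{\pi^{-1}(S)}_{\bm{\alpha}}(T\cdot\cf_X)$ for every $X\in\cube{r\times n}$, and since every element of $\mathcal{I}_{r,n}$ arises as some $\cf_X$, this yields the claimed functional identity $K^{S}_{T\cdot\bm{\alpha}}=K^{\pi^{-1}(S)}_{\bm{\alpha}}\circ T$.

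The step that deserves genuine care — as opposed to routine symbol pushing — is the bookkeeping of the three actions simultaneously in play: the left-multiplication action of $\GL{r}{2}$ on $\cube{r\times n}$, the induced action $\cf_{TX}=T\cdot\cf_X$ on $\mathcal{I}_{r,n}$, and the row-permutation action of $\mathfrak{S}_r$ of Proposition~\ref{prop:partial_fourer_under_Sr}. I would check explicitly that the permutation matrix $T$ and the permutation $\pi$ induce the same map on rows, so that the index set transforms as $S\mapsto\pi^{-1}(S)$ rather than $S\mapsto\pi(S)$, and confirm that the normalization exponents coincide because $|\pi^{-1}(S)|=|S|$. Once these conventions are aligned, the result is immediate.
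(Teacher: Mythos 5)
Your proposal is correct and is essentially the paper's own proof: the paper obtains Proposition \ref{prop:Krawtchouk_under_Sr} as an ``immediate consequence'' of the recorded identity $K^{S}_{T\cdot\bm{\alpha}}(\cf_X)=\mathcal{F}_S(L_{\bm{\alpha}}\circ T)(X)$ combined with Proposition \ref{prop:partial_fourer_under_Sr}, which is exactly the chain you spell out. Your explicit care about the $\pi$-versus-$\pi^{-1}$ bookkeeping, the identification $\cf_{TX}=T\cdot\cf_X$, and the equality of normalizations $2^{|S|n}=2^{|\pi^{-1}(S)|n}$ simply fills in the routine details the paper leaves implicit.
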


\begin{proposition}
    Let $T\in \GL{r}{2}$ be the mapping $\bm{e}_i\mapsto \bm{e}_i+\bm{e}_j$
    for some $i,j\in[r]$, and
    $\bm{e}_k\mapsto \bm{e}_k$ for every $k\neq i$.
    \begin{itemize}
        \item if $i,j\in S$:
        \[
            K^{S}_{T \cdot \bm{\alpha}}
            = K^{S}_{\bm{\alpha}} \circ T^{\intercal}
        \]
        \item if $i,j\notin S$:
        \[
            K^{S}_{T \cdot \bm{\alpha}}
            = K^{S}_{\bm{\alpha}} \circ T
        \]
        \item if $i\in S,j\notin S$:
        \[
            K^{S}_{T \cdot \bm{\alpha}}(\bm{\beta})
            = (-1)^{\sum_{\bm{u}\in\cube{r}} \beta_{\bm{u}} u_i u_j}
                K^{S}_{\bm{\alpha}}(\bm{\beta})
        \]
    \end{itemize}
    \label{prop:Krawtchouk_under_GLr}
\end{proposition}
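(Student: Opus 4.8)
The plan is to deduce all three cases from Proposition~\ref{prop:partial_fourer_under_GLr} by translating the statement about $\mathcal{F}_S$ into the symmetrized language of column enumerators. The bridge is the identity
\[
    K^{S}_{T\cdot \bm{\alpha}}(\cf_X) = 2^{|S|n}\,\mathcal{F}_{S}(L_{\bm{\alpha}}\circ T)(X),
\]
which I would first establish from the definition $K^{S}_{\bm{\alpha}}(\cf_X)=2^{|S|n}\mathcal{F}_S(L_{\bm{\alpha}})(X)$ together with the observation that the elementary matrix $T$ in the statement satisfies $T^2=I$ over $\mathbb{F}_2$, hence $L_{T\cdot\bm{\alpha}}=L_{\bm{\alpha}}\circ T^{-1}=L_{\bm{\alpha}}\circ T$. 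I would also record the companion fact, already noted in Section~\ref{section:symmetrized_lps}, that $\GL{r}{2}$ acts column-wise, so $\cf_{MX}=M\cdot\cf_X$ for every $M\in\GL{r}{2}$; combined with the definition of $K^{S}_{\bm{\alpha}}$ this gives $2^{|S|n}\mathcal{F}_S(L_{\bm{\alpha}})(MX)=K^{S}_{\bm{\alpha}}(M\cdot\cf_X)=(K^{S}_{\bm{\alpha}}\circ M)(\cf_X)$ under the action $(M\cdot\bm{\alpha})_{\bm{u}}=\alpha_{M^{-1}\bm{u}}$.

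With these two facts in hand the first two cases are immediate. When $i,j\in S$, Proposition~\ref{prop:partial_fourer_under_GLr} gives $\mathcal{F}_S(L_{\bm{\alpha}}\circ T)=\mathcal{F}_S(L_{\bm{\alpha}})\circ T^{\intercal}$; multiplying by $2^{|S|n}$ and evaluating at $X$ rewrites the right-hand side as $K^{S}_{\bm{\alpha}}(\cf_{T^{\intercal}X})=K^{S}_{\bm{\alpha}}(T^{\intercal}\cdot\cf_X)=(K^{S}_{\bm{\alpha}}\circ T^{\intercal})(\cf_X)$. Since every $\bm{\beta}\in\mathcal{I}_{r,n}$ is realized as $\cf_X$ for some $X$, this yields $K^{S}_{T\cdot\bm{\alpha}}=K^{S}_{\bm{\alpha}}\circ T^{\intercal}$, as claimed. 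The case $i,j\notin S$ is word-for-word the same with $T^{\intercal}$ replaced by $T$.

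The third case, $i\in S$ and $j\notin S$, is the only one requiring genuine computation, since here Proposition~\ref{prop:partial_fourer_under_GLr} yields a scalar $\chi_{\bm{x}_i}(\bm{x}_j)=(-1)^{\langle \bm{x}_i,\bm{x}_j\rangle}$ rather than a coordinate substitution, and I must re-express this sign purely through $\bm{\beta}=\cf_X$. The key step is the same column-grouping manipulation used in Proposition~\ref{prop:weight_equivalence}: writing $\bm{\xi}_k$ for the $k$-th column of $X$, the inner product over $\mathbb{F}_2$ expands as $\langle \bm{x}_i,\bm{x}_j\rangle=\sum_{k=1}^n(\bm{\xi}_k)_i(\bm{\xi}_k)_j$, and grouping the columns by their value gives $\sum_{\bm{u}\in\cube{r}}\beta_{\bm{u}}\,u_iu_j$. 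Substituting $\chi_{\bm{x}_i}(\bm{x}_j)=(-1)^{\sum_{\bm{u}}\beta_{\bm{u}}u_iu_j}$ into $K^{S}_{T\cdot\bm{\alpha}}(\bm{\beta})=\chi_{\bm{x}_i}(\bm{x}_j)\,K^{S}_{\bm{\alpha}}(\bm{\beta})$ then produces the stated formula. I expect this parity bookkeeping to be the only nontrivial point, and even it is routine; the remainder is symbol-pushing through the definitions, so the overall difficulty of the proof is low, consistent with the paper's description of it as an immediate consequence.
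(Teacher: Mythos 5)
Your proposal is correct and is essentially the paper's own argument: the paper records the bridge identity $K^{S}_{T\cdot\bm{\alpha}}(\cf_X)=\mathcal{F}_S(L_{\bm{\alpha}}\circ T)(X)$ (stated there without the $2^{|S|n}$ normalization) and then declares the proposition an immediate consequence of Proposition~\ref{prop:partial_fourer_under_GLr}, which is exactly the reduction you carry out. The details you supply --- the involution property $T^{-1}=T$ over $\mathbb{F}_2$, the column-wise action $\cf_{MX}=M\cdot\cf_X$, and the column-grouping identity $\langle\bm{x}_i,\bm{x}_j\rangle\equiv\sum_{\bm{u}\in\cube{r}}\beta_{\bm{u}}u_iu_j \pmod 2$ in the mixed case --- are precisely what the paper leaves implicit.
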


\bibliography{refs}
\bibliographystyle{IEEEtran}

\appendix

\section{Appendix}
\label{section:appendix}

\subsection{Proofs for section
\ref{section:partial_fourier_transform_symmetries}}

\begin{proof}[Proof of Proposition \ref{prop:partial_fourer_under_Sn}]
    For $\bm{x},\bm{y}\in\cube{n}$,
    \begin{dmath*}
        \chi_{\bm{x}}(\sigma\cdot {\bm{y}})
        = (-1)^{\sum_{i=1}^{n} x_i y_{\sigma(i)}}
        = (-1)^{\sum_{i=1}^{n} x_{\sigma^{-1}(i)} y_{i}}
        = \chi_{\sigma^{-1}\cdot {\bm{x}}}({\bm{y}})
    \end{dmath*}
    and
    \[
        \delta_{\bm{x}}(\sigma\cdot {\bm{y}})
        = \prod_{i=1}^{n}\delta_{x_i}(y_{\sigma(i)})
        = \prod_{i=1}^{n}\delta_{x_{\sigma^{-1}(i)}}(y_{i})
        = \delta_{\sigma^{-1}\cdot {\bm{x}}}({\bm{y}})
    \]
    Hence, for $\x,\y\in\cube{r\times n}$,
    \[
        \chi^{S}_{\x}(\sigma \cdot \y)
        =  \chi^{S}_{\sigma^{-1}\cdot \x}(\y)
    \]
    Finally, letting $\y'=\sigma\cdot \y$,
    \begin{dmath*}
        \mathcal{F}_S(f\circ \sigma)(\x)
        = \sum_{\y\in\cube{r\times n}}\chi_{\x}(\y)f(\sigma\cdot \y)
        = \sum_{\y'\in\cube{r\times n}}\chi_{\x}(\sigma^{-1}\cdot\y')f(\y')
        \\
        = \sum_{\y'\in\cube{r\times n}}\chi_{\sigma\cdot\x}(\y')f(\y')
        = \mathcal{F}_S(f)(\sigma\cdot\x)
    \end{dmath*}
\end{proof}

\begin{proof}[Proof of Proposition \ref{prop:partial_fourer_under_Sr}]
    \begin{dmath*}
        \mathcal{F}_S(f\circ \pi)({\bm{x}}_1,\dots,{\bm{x}}_r)
        = 2^{-rn} \sum_{{\bm{y}}_1,\dots,{\bm{y}}_r}
            \chi^{S}_{{\bm{x}}_1,\dots,{\bm{x}}_r}({\bm{y}}_1,\dots,{\bm{y}}_r)
            f({\bm{y}}_{\pi(1)},\dots,{\bm{y}}_{\pi(r)})
        = 2^{-rn} \sum_{{\bm{y}}_1,\dots,{\bm{y}}_r}
            \chi^{S}_{{\bm{x}}_1,\dots,{\bm{x}}_r}({\bm{y}}_{\pi^{-1}(1)},\dots,{\bm{y}}_{\pi^{-1}(r)})
            \times
            \\
            \times
            f({\bm{y}}_1,\dots,{\bm{y}}_r)
        = 2^{-rn} \sum_{{\bm{y}}_1,\dots,{\bm{y}}_r}
            \prod_{i\in S}\chi_{{\bm{x}}_i}({\bm{y}}_{\pi^{-1}(i)})
            \times
            \\
            \times
            \prod_{i\in [r]\setminus S}\delta_{{\bm{x}}_i}({\bm{y}}_{\pi^{-1}(i)})
            f({\bm{y}}_1,\dots,{\bm{y}}_r)
        = 2^{-rn} \sum_{{\bm{y}}_1,\dots,{\bm{y}}_r}
            \prod_{j\in \pi^{-1}(S)}\chi_{{\bm{x}}_{\pi(j)}}({\bm{y}}_{j})
            \times
            \\
            \times
            \prod_{j\in [r]\setminus \pi^{-1}(S)}\delta_{{\bm{x}}_{\pi(j)}}({\bm{y}}_{j})
            f({\bm{y}}_1,\dots,{\bm{y}}_r)
        = \mathcal{F}_{\pi^{-1}(S)}(f)({\bm{x}}_{\pi(1)},\dots,{\bm{x}}_{\pi(r)})
    \end{dmath*}
\end{proof}

\begin{proof}[Proof of Proposition \ref{prop:partial_fourer_under_GLr}]
    Consider $f$ as a function of $r$ vectors,
    ${\bm{x}}_1,\dots,{\bm{x}}_r\in\cube{n}$. Then $T$ maps ${\bm{x}}_i\mapsto
    {\bm{x}}_i+{\bm{x}}_j$, and ${\bm{x}}_k\mapsto {\bm{x}}_k$ for $k\neq i$.

    For $k=1,\dots,r$, and ${\bm{x}}\in\cube{n}$, define a set of functions
    $\{\psi^{(k)}_{\bm{x}}\}_{k\in[r]}$, where
    $\psi^{(k)}_{\bm{x}}=\chi_{\bm{x}}$ if $k\in S$ and
    $\psi^{(k)}_{\bm{x}}=\delta_{\bm{x}}$ otherwise.
    \begin{dmath*}
        \mathcal{F}_S(f\circ T)({\bm{x}}_1,\dots,{\bm{x}}_r) =
        2^{-rn} \sum_{\bm{y}_1,\dots,\bm{y}_r} \prod_{k=1}^{r}\psi^{(k)}_{\bm{y}_k}(\bm{y}_k)
        f(\bm{y}_1,\dots,\underbrace{\bm{y}_i+\bm{y}_j}_{\text{index } i},\dots,\bm{y}_r)
    \end{dmath*}
    replacing the sum over $\bm{y}_i$ by a sum over $\bm{y}_i+\bm{y}_j$, we get
    \begin{dmath}
        =
        2^{-rn} \sum_{\bm{y}_1,\dots,\bm{y}_r}
            \psi^{(i)}_{\bm{x}_i}(\bm{y}_i+\bm{y}_j)\psi^{(j)}_{\bm{y}_j}(\bm{y}_j)
            \prod_{k\in[r]\setminus\{i,j\}}\psi^{(k)}_{\bm{y}_k}(\bm{y}_k) 
                f(\bm{y}_1,\dots,\bm{y}_i,\dots,\bm{y}_r)
        \label{eq:partial_fourier_GL}
    \end{dmath}
    We now examine the expression
    $\psi^{(i)}_{\bm{x}_i}(\bm{y}_i+\bm{y}_j)\psi^{(j)}_{\bm{y}_j}(\bm{y}_j)$
    in the different cases of the proposition.
    \begin{itemize}
        \item $i,j\in S$:
        \begin{dmath*}
            \psi^{(i)}_{\bm{x}_i}(\bm{y}_i+\bm{y}_j)\psi^{(j)}_{\bm{x}_j}(\bm{y}_j)
            = \chi_{\bm{x}_i}(\bm{y}_i+\bm{y}_j)\chi_{\bm{x}_j}(\bm{y}_j)
            = \chi_{\bm{x}_i}(\bm{y}_i)\chi_{\bm{x}_i+\bm{x}_j}(\bm{y}_j)
            = \psi^{(i)}_{\bm{x}_i}(\bm{y}_i)\psi^{(j)}_{\bm{x}_i+\bm{x}_j}(\bm{y}_j)
        \end{dmath*}
        which is the same as applying the mapping $\bm{x}_j\mapsto \bm{x}_i+\bm{x}_j$, or
        equivalently $T^\intercal$, to ${\bm{x}}_1,\dots,{\bm{x}}_r$.
        \item $i,j\notin S$:
        \begin{dmath*}
            \psi^{(i)}_{\bm{x}_i}(\bm{y}_i+\bm{y}_j)\psi^{(j)}_{\bm{x}_j}(\bm{y}_j)
            = \delta_{\bm{x}_i}(\bm{y}_i+\bm{y}_j)\delta_{\bm{x}_j}(\bm{y}_j)
            = \delta_{\bm{x}_i}(\bm{y}_i+\bm{x}_j)\delta_{\bm{x}_j}(\bm{y}_j)
            = \delta_{\bm{x}_i+\bm{x}_j}(\bm{y}_i)\delta_{\bm{x}_j}(\bm{y}_j)
            = \psi^{(i)}_{\bm{x}_i+\bm{x}_j}(\bm{y}_i)\psi^{(j)}_{\bm{x}_j}(\bm{y}_j)
        \end{dmath*}
        which equivalent to applying $T$ to ${\bm{x}}_1,\dots,{\bm{x}}_r$.
        \item $i\in S, j\notin S$:
        \begin{dmath*}
            \psi^{(i)}_{\bm{x}_i}(\bm{y}_i+\bm{y}_j)\psi^{(j)}_{\bm{x}_j}(\bm{y}_j)
            = \chi_{\bm{x}_i}(\bm{y}_i+\bm{y}_j)\delta_{\bm{x}_j}(\bm{y}_j)
            = \chi_{\bm{x}_i}(\bm{x}_j)\chi_{\bm{x}_i}(\bm{y}_i)\delta_{\bm{x}_j}(\bm{y}_j)
            = \chi_{\bm{x}_i}(\bm{x}_j)\psi^{(i)}_{\bm{x}_i}(\bm{y}_i)\psi^{(j)}_{\bm{x}_j}(\bm{y}_j)
        \end{dmath*}
        observe that $\chi_{\bm{x}_i}(\bm{x}_j)$ is constant with respect to the sum
        in \eqref{eq:partial_fourier_GL}.
        \item $i\notin S, j\in S$:
        \[
            \psi^{(i)}_{\bm{x}_i}(\bm{y}_i+\bm{y}_j)\psi^{(j)}_{\bm{x}_j}(\bm{y}_j)
            = \delta_{\bm{x}_i}(\bm{y}_i+\bm{y}_j)\chi_{\bm{x}_j}(\bm{y}_j)
        \]
        Here there is no obvious way to rewrite the functions so as to separate
        $\bm{y}_i$ and $\bm{y}_j$.
    \end{itemize}
\end{proof}

\subsection{Numerical Results}\label{section:numerical_results}

We have experimented with several variants of $\DelsarteLin{2}{n}{d}$ with $n$
ranging between $10$ and $40$ and $d\le n/2$ is even.
In those variants of the LP - we replace 
\eqref{eq:delsarte_lin:C2} with
\eqref{eq:delsarte_lin:C2_weak}, and 
\eqref{eq:delsarte_lin:objective} with 
\eqref{eq:delsarte_lin:objective_r}.
The table below only shows the
results for $n\geq 20$.

The number of variables is
$|\mathcal{I}_{r,n}|=\binom{n+2^r-1}{2^r-1}$
and, if we
consider $r$ as a constant, there are $O_n(|\mathcal{I}_{r,n}|)$ 
constraints. In practice, symmetrization w.r.t.
$\GL{r}{2}$ reduces the problem size
(variables $\times$ constraints)
by a factor of $2^{\Omega(r)}$, which is
significant. Since
we have not yet developed the necessary theoretical tools for
such symmetrization, it was carried out algorithmically. We intend to
develop such theory so as to solve instances of 
$\DelsarteLin{r}{n}{d}$ with larger values of $r$.

The number of variables is further reduced using the well-known fact,
that if $d$ is even then an even code attains $A(n,d)$.
Namely, we set $\varphi_{\bm{\alpha}} = 0$
if $(n- \chi_{\bm{u}}^\intercal \bm{\alpha})/2$ is odd, for some $\bm{u}\in\cube{r}$.

The Krawtchouk polynomials were computed with a
recurrence formula, e.g. (16) in
\cite{diaconis2014introduction}. The partial Krawtchouks $K^{S}_{\bm{\alpha}}$
were computed using proposition \ref{prop:partial_Krawtchouk}.
We used two exact solvers: SoPlex
\cite{gleixner2012improving,gleixner2016iterative,gamrath2020scip}
and QSoptEx \cite{applegate2007exact}, with up to 128GB of RAM
and at most 3 days of runtime. Some instances were solved 
by one solver and not the other.
Missing entries were solved by neither.

The best in each row is marked with boldface.
Entries are marked with a "$*$" if
$\lfloor \log_2(\text{entry}) \rfloor$ equals the best
known upper bound, as reported in
\cite{Grassl:codetables}.

\begin{center}

\setlength{\tabcolsep}{3pt}
\def\arraystretch{1.3}

\tablefirsthead{
\toprule
   \multicolumn{2}{r}{Variant} & \multicolumn{2}{l}{$(C2')$} & \multicolumn{2}{l}{$(C2)$} &             Delsarte & Schrijver \cite{schrijver2005new} \\
   &  &              $(Obj')$ &               $(Obj)$ &              $(Obj')$ &               $(Obj)$ &         &      \\
n & dist. &                       &                       &                       &                       &         &             \\
\midrule}

\tablehead{
\toprule
   \multicolumn{2}{r}{Variant} & \multicolumn{2}{l}{$(C2')$} & \multicolumn{2}{l}{$(C2)$} &             Delsarte & Schrijver \cite{schrijver2005new} \\
   &  &              $(Obj')$ &               $(Obj)$ &              $(Obj')$ &               $(Obj)$ &        &       \\
n & dist. &                       &                       &                       &                       &          &            \\
\midrule}

\tabletail{
\midrule
\multicolumn{8}{r}{{Continued on next page}} \\
\midrule}

\tablelasttail{\bottomrule}

\begin{supertabular}{llllllll}
20 & 4  &             $26214^*$ &             $26214^*$ &    $\textbf{21845}^*$ &    $\textbf{21845}^*$ &            $26214^*$ & -\\
   & 6  &                $2328$ &                $2285$ &     $\textbf{1588}^*$ &              $1593^*$ &               $2373$  & - \\
   & 8  &               $268^*$ &      $\textbf{256}^*$ &      $\textbf{256}^*$ &      $\textbf{256}^*$ &              $291^*$  & $274^*$ \\
   & 10 &                  $40$ &                  $40$ &       $\textbf{24}^*$ &       $\textbf{24}^*$ &                 $40$  & - \\
\midrule
21 & 4  &    $\textbf{43691}^*$ &    $\textbf{43691}^*$ &    $\textbf{43691}^*$ &    $\textbf{43691}^*$ &            $47663^*$  & - \\
   & 6  &                $4197$ &                $4138$ &              $3010^*$ &     $\textbf{2977}^*$ &               $4443$  & - \\
   & 8  &      $\textbf{512}^*$ &      $\textbf{512}^*$ &      $\textbf{512}^*$ &      $\textbf{512}^*$ &              $572^*$  & - \\
   & 10 &                $51^*$ &                $52^*$ &       $\textbf{35}^*$ &                $36^*$ &                 $64$  & - \\
\midrule
22 & 4  &    $\textbf{87381}^*$ &    $\textbf{87381}^*$ &    $\textbf{87381}^*$ &    $\textbf{87381}^*$ &   $\textbf{87381}^*$  & - \\
   & 6  &              $7380^*$ &              $7327^*$ &              $5770^*$ &     $\textbf{5608}^*$ &             $7724^*$  & - \\
   & 8  &     $\textbf{1024}^*$ &     $\textbf{1024}^*$ &     $\textbf{1024}^*$ &     $\textbf{1024}^*$ &    $\textbf{1024}^*$  & - \\
   & 10 &                  $92$ &                  $89$ &       $\textbf{57}^*$ &                $61^*$ &                 $95$  & $87$ \\
\midrule
23 & 4  &   $\textbf{174763}^*$ &   $\textbf{174763}^*$ &   $\textbf{174763}^*$ &   $\textbf{174763}^*$ &  $\textbf{174763}^*$  & - \\
   & 6  &             $13703^*$ &             $13690^*$ &             $10447^*$ &    $\textbf{10102}^*$ &            $13776^*$  & $13766^*$ \\
   & 8  &     $\textbf{2048}^*$ &     $\textbf{2048}^*$ &     $\textbf{2048}^*$ &     $\textbf{2048}^*$ &    $\textbf{2048}^*$  & - \\
   & 10 &                 $152$ &                 $152$ &       $\textbf{90}^*$ &                $93^*$ &                $152$  & - \\
\midrule
24 & 4  &   $\textbf{349525}^*$ &   $\textbf{349525}^*$ &   $\textbf{349525}^*$ &   $\textbf{349525}^*$ &  $\textbf{349525}^*$  & - \\
   & 6  &             $24054^*$ &             $24018^*$ &             $18786^*$ &    $\textbf{18715}^*$ &            $24108^*$  & - \\
   & 8  &     $\textbf{4096}^*$ &     $\textbf{4096}^*$ &     $\textbf{4096}^*$ &     $\textbf{4096}^*$ &    $\textbf{4096}^*$  & - \\
   & 10 &                 $280$ &                 $280$ &      $\textbf{155}^*$ &               $160^*$ &                $280$  & - \\
   & 12 &       $\textbf{48}^*$ &       $\textbf{48}^*$ &       $\textbf{48}^*$ &       $\textbf{48}^*$ &      $\textbf{48}^*$  & - \\
\midrule
25 & 4  &            $599186^*$ &            $599186^*$ &            $582826^*$ &   $\textbf{579701}^*$ &           $645278^*$  & - \\
   & 6  &               $47481$ &               $47176$ &      $\textbf{34657}$ &               $34729$ &              $48149$  & $47998$ \\
   & 8  &              $5666^*$ &              $5571^*$ &              $4450^*$ &     $\textbf{4200}^*$ &             $6475^*$  & $5477^*$ \\
   & 10 &                 $511$ &                 $497$ &        $\textbf{262}$ &                 $284$ &                $551$  & $503$ \\
   & 12 &                $61^*$ &                $62^*$ &                $49^*$ &       $\textbf{48}^*$ &                 $75$  & - \\
\midrule
26 & 4  &           $1198373^*$ &           $1198373^*$ &           $1126532^*$ &  $\textbf{1121065}^*$ &          $1198373^*$  & - \\
   & 6  &               $86693$ &               $86847$ &      $\textbf{66014}$ &               $66638$ &              $93623$  & - \\
   & 8  &               $10099$ &               $10031$ &              $7516^*$ &     $\textbf{7508}^*$ &              $10435$  & - \\
   & 10 &                 $930$ &                 $922$ &      $\textbf{490}^*$ &                 $533$ &               $1040$  & $886$ \\
   & 12 &               $105^*$ &                $99^*$ &       $\textbf{77}^*$ &                $79^*$ &              $113^*$  & - \\
\midrule
27 & 4  &           $2396745^*$ &           $2396745^*$ &  $\textbf{2097152}^*$ &  $\textbf{2097152}^*$ &          $2396745^*$  & - \\
   & 6  &              $162180$ &              $162027$ &   $\textbf{125238}^*$ &            $126201^*$ &             $163840$  & - \\
   & 8  &               $17803$ &               $17727$ &    $\textbf{12698}^*$ &             $12774^*$ &              $18190$  & $17768$ \\
   & 10 &                $1766$ &                $1766$ &      $\textbf{854}^*$ &               $954^*$ &               $1766$  & - \\
   & 12 &               $171^*$ &               $171^*$ &               $132^*$ &      $\textbf{129}^*$ &              $171^*$  & - \\
\midrule
28 & 4  &           $4793490^*$ &           $4793490^*$ &  $\textbf{4194304}^*$ &                     - &          $4793490^*$  & - \\
   & 6  &              $291202$ &              $291173$ &            $234649^*$ &   $\textbf{234626}^*$ &             $291271$  & - \\
   & 8  &             $32126^*$ &             $32119^*$ &    $\textbf{21989}^*$ &             $22120^*$ &            $32206^*$  & $32151^*$ \\
   & 10 &                $3194$ &                $3189$ &     $\textbf{1482}^*$ &              $1646^*$ &               $3200$  & - \\
   & 12 &                 $288$ &                 $288$ &      $\textbf{213}^*$ &      $\textbf{213}^*$ &                $288$  & - \\
   & 14 &                $56^*$ &                $56^*$ &       $\textbf{32}^*$ &       $\textbf{32}^*$ &               $56^*$  & - \\
\midrule
29 & 4  & $\textbf{8388608}^*$ &  $\textbf{8388608}^*$ &  $\textbf{8388608}^*$ &  $\textbf{8388608}^*$ &          $8947849^*$  & - \\
   & 6  &              $574493$ &              $573756$ &   $\textbf{430773}^*$ &            $432499^*$ &             $581827$  & - \\
   & 8  &               $57247$ &               $57217$ &      $\textbf{38276}$ &               $39205$ &              $58097$  & - \\
   & 10 &                $6155$ &                $6074$ &       $\textbf{2743}$ &                $3181$ &               $6363$  & - \\
   & 12 &                 $550$ &                 $541$ &      $\textbf{320}^*$ &               $323^*$ &                $573$  & - \\
   & 14 &                  $70$ &                  $72$ &       $\textbf{47}^*$ &                $49^*$ &                 $88$  & - \\
\midrule
30 & 8  &              $108267$ &              $107044$ &      $\textbf{67353}$ &               $71095$ &             $114816$  & - \\
   & 10 &               $11517$ &               $11340$ &       $\textbf{4827}$ &                $5929$ &              $12525$  & - \\
   & 12 &              $1022^*$ &                $1026$ &      $\textbf{535}^*$ &               $582^*$ &               $1132$  & - \\
   & 14 &               $114^*$ &               $112^*$ &       $\textbf{74}^*$ &                $80^*$ &                $130$  & - \\
\midrule
31 & 10 &               $20838$ &               $20738$ &       $\textbf{8651}$ &                     - &              $22296$  & - \\
   & 12 &              $1781^*$ &              $1763^*$ &     $\textbf{1024}^*$ &     $\textbf{1024}^*$ &             $1840^*$  & - \\
   & 14 &                 $196$ &                 $196$ &      $\textbf{110}^*$ &               $115^*$ &                $196$  & - \\
\midrule
32 & 12 &              $3082^*$ &              $3082^*$ &     $\textbf{2048}^*$ &     $\textbf{2048}^*$ &             $3082^*$  & - \\
   & 14 &                 $314$ &                 $313$ &      $\textbf{187}^*$ &               $191^*$ &                $315$  & - \\
   & 16 &       $\textbf{64}^*$ &       $\textbf{64}^*$ &       $\textbf{64}^*$ &       $\textbf{64}^*$ &      $\textbf{64}^*$  & - \\
\midrule
33 & 12 &                $5821$ &                $5821$ &     $\textbf{2903}^*$ &              $3037^*$ &               $5821$  & - \\
   & 14 &                 $617$ &                 $612$ &      $\textbf{310}^*$ &               $324^*$ &                $629$  & - \\
   & 16 &                $80^*$ &                $82^*$ &                $65^*$ &       $\textbf{64}^*$ &               $99^*$  & - \\
\midrule
34 & 12 &               $10878$ &               $10671$ &                     - &     $\textbf{5726}^*$ &              $11641$  & - \\
   & 14 &                $1195$ &                $1203$ &      $\textbf{510}^*$ &                 $568$ &               $1258$  & - \\
   & 16 &               $124^*$ &               $120^*$ &       $\textbf{99}^*$ &               $103^*$ &                $144$  & - \\
\midrule
35 & 12 &               $20011$ &               $19810$ &                     - &      $\textbf{10603}$ &              $21727$  & - \\
   & 14 &                $1774$ &                $1759$ &      $\textbf{890}^*$ &               $989^*$ &               $2026$  & - \\
   & 16 &               $215^*$ &               $215^*$ &               $172^*$ &      $\textbf{169}^*$ &              $215^*$  & - \\
\midrule
36 & 14 &                     - &                     - &     $\textbf{1664}^*$ &              $1786^*$ &               $3177$  & - \\
   & 16 &               $352^*$ &               $352^*$ &      $\textbf{256}^*$ &      $\textbf{256}^*$ &              $352^*$  & - \\
   & 18 &                  $72$ &                  $72$ &         $\textbf{40}$ &         $\textbf{40}$ &                 $72$  & - \\
\midrule
37 & 16 &                 $704$ &                 $704$ &      $\textbf{366}^*$ &                     - &                $704$  & - \\
\midrule
38 & 14 &                     - &               $10211$ &                     - &     $\textbf{5348}^*$ &              $10211$  & - \\
\midrule
39 & 16 &                $1905$ &                $1918$ &       $\textbf{1138}$ &       $\textbf{1138}$ &               $2271$  & - \\
\midrule
40 & 16 &                $3493$ &                $3488$ &                     - &       $\textbf{2276}$ &               $3510$  & - \\
\end{supertabular}

\input{results_diff_plot_floor.tex}
\end{center}

\end{document}